\newtheorem{theorem}{Theorem}[section]
\newtheorem{lemma}[theorem]{Lemma}
\newtheorem{definition}[theorem]{Definition}
\newtheorem{remark}[theorem]{Remark}
\newtheorem{assumption}[theorem]{Assumption}
\newcommand{\R}{\mathbb{R}}
\newcommand{\Q}{\mathbb{Q}}
\newcommand{\N}{\mathbb{N}}
\newcommand{\E}{\mathbb{E}}
\renewcommand{\P}{\mathds{P}}
\newcommand{\coloneq}{\mathrel{\mathop:}=}
\renewcommand{\epsilon}{\varepsilon}
\newcommand{\opt}{{\star}}
\newcommand{\interior}[1]{{#1}^{o}}
\newcommand{\cA}{\mathcal{A}}
\newcommand{\cF}{\mathcal{F}}
\newcommand{\cP}{\mathcal{P}}
\newcommand{\cS}{\mathcal{S}}
\newcommand{\cD}{\mathcal{D}}
\renewcommand{\d}{p_d}
\newcommand{\hatd}{\hat{p}_d}
\newcommand{\Adm}{\mathcal{A}}
\newcommand{\Var}{\mathrm{Var}}
\newcommand{\pip}{(\pi^\opt)^\prime}
\begin{document}

\title[Robust marginal utility pricing]{Distributionally robust portfolio maximisation and marginal utility pricing in one period financial markets}

\date{\today}

\author[Jan Ob{\l}{\'o}j]{Jan Ob{\l}{\'o}j$^*$}
\address{Jan Ob{\l}{\'o}j\newline
Mathematical Institute and St John's College, University of Oxford\newline
Woodstock Road\newline
Oxford, OX2 6GG
}
\author[Johannes Wiesel]{Johannes Wiesel$^{*+}$}
\address{Johannes Wiesel\newline
Department of Statistics, Columbia University\newline
1255 Amsterdam Avenue\newline
New York, NY 10027
}

\keywords{Distributionally robust optimisation, Davis marginal utility price, optimal investment, Wasserstein distance, robust finance, model uncertainty, sensitivity analysis}
\thanks{$^*$The authors would like to thank Daniel Bartl and Samuel Drapeau for helpful discussions. The authors declare that there are no conflicts of interests. JW acknowledges support from the German Academic Scholarship Foundation.\\
$^{+}$Corresponding author: Johannes Wiesel, Department of Statistics, Columbia University, 1255 Amsterdam Avenue, New York, NY 10027}

\begin{abstract}
We consider the optimal investment and marginal utility pricing problem of a risk averse agent and quantify their exposure to a small amount of model uncertainty. Specifically, we compute explicitly the first-order sensitivity of their value function, optimal investment policy and Davis' option prices to model uncertainty. To achieve this, we capture model uncertainty by replacing the baseline model $\P$ with an adverse choice from a small Wasserstein ball around $\P$ in the space of probability measures. Our sensitivities are thus fully non-parametric. We show that the results entangle the baseline model specification and the agent's risk attitudes. The sensitivities can behave in a non-monotone way as a function of the baseline model's Sharpe's ratio, the relative weighting of assets in an agent's portfolio can change and marginal prices can increase when an agent faces model uncertainty. 
\end{abstract}

\maketitle

\emph{This paper is dedicated to the memory of Mark H.A. Davis. I remain forever grateful to Mark who introduced me to the world of mathematical finance when I joined his group at Imperial College London in 2006. He was a mentor and a friend. His intellectual curiosity combined with a generosity of spirit and good humour were hugely enriching. His capacity to ask key questions and find elegant and insightful answers set my gold standard.\\
\begin{flushright}
 Jan Ob\l\'oj 
\end{flushright}
}
\section{Introduction}

Modern theory of finance enables powerful and versatile quantitative analysis of optimal investment problems. It would be impossible to give justice here to the body of relevant literature. However, since the seminal works of \cite{markowitz1959portfolio, merton1969lifetime}, at its heart are mathematical techniques which compute an agent's investment decisions from two inputs: the agent's risk attitudes and a probabilistic description of the financial market. The former is considered subjective and an important stream of research looks at ways to elucidate an agents' preferences and whether these can be assumed to be \emph{rational}, i.e., to satisfy certain axiomatic properties. The latter input, a probability measure $\P$, captures the agent's model for the financial market and results from numerous considerations and tradeoffs, such as calibration to empirical data, reproduction of certain stylised features and analytical tractability, among others.

With a fixed baseline model $\P$, a rational decision maker, following \cite{merton1969lifetime} and in line with the classical decision theory, see e.g., \cite{von1947theory, savage1951theory}, finds her optimal investment policy $\pi^\opt$ by maximising her expected utility of wealth
$$\max_{\pi\in \cA} \E_{\P}[u(X^\pi)],$$ 
and computes the prices of derivative instruments through marginal utility pricing, see \cite{davis1997option}. However, in practice, the agent is bound to have a degree of uncertainty about their choice of $\P$. This is often referred to as \emph{Knightian uncertainty} after \cite{knight2012risk}. Following \cite{anderson2003quartet}, we are interested here in misspecifications of $\P$ which are small, in a statistical sense, but unrestricted otherwise (e.g., these do not need to be absolutely continuous with respect to $\P$). Our main contribution is to understand and quantify the sensitivity of an agent's decisions to such model uncertainty. We compute explicitly the first order change to  the key quantities -- the agent's value function, the agent's optimal portfolio allocation and their marginal utility prices -- in response to small levels of model uncertainty.  

From an axiomatic point of view, see \cite{gilboa1989maxmin, maccheroni2006ambiguity,schied2007optimal, schied2009robust}, an agent who is averse both to risk and model uncertainty, considers a max-min criterion
$$ \max_{\pi\in \cA} \min_{\tilde \P\in \cP}\E_{\tilde \P}[u(X^\pi)],$$
see section \ref{sec:uncertainty_ways} for discussion and generalisations. While appealing from a decision theoretic point of view, this criterion only introduces an additional subjective input: the family $\cP$ of plausible models for reality. Furthermore, it also often renders the problem intractable with little hope for analytic formulae. 
In this paper we offer a tried and tested mathematical solution to this problem: we compute the first-order approximation to the mapping giving outputs in function of $\cP$, i.e., we compute their derivative with respect to $\cP$ at the point $\cP=\{\P\}$. Specifically, as $\cP$ varies in an infinite dimensional space, we consider $\cP$ given as a ball around $\{\P\}$ with radius $\delta$ and compute explicitly the first order expansion of all of the outputs in $\delta$. This, we believe, captures the essence of model misspecification as described above, see \cite{anderson2003quartet}. 

We use the Wasserstein metric to capture balls around $\{\P\}$. We argue that this metric is the natural choice both because of its theoretical properties as well as from a data-driven perspective, see section \ref{sec:main results}. Our analysis is fully non-parametric and reveals that the sensitivity to model uncertainty is a complex issue, which entangles the baseline model and the agent's risk attitudes. We provide explicit formulae to compute these sensitivities, thus enabling an agent to quantify their exposure to model uncertainty. We believe that these sensitivities offer a more robust and structural tool than any existing, parametric or ad-hoc, methods.

Naturally, the agent's value function is decreasing in $\delta$: the introduction of model uncertainty combined with a max-min approach means the agent will never welcome the new source of uncertainty. However, the agent's optimal investment policy will react in a much more complex way. Already in a one dimensional setting, we show that the reduction in the agent's position is not monotone in the baseline model's Sharpe ratio. In higher dimensions, as the uncertainty affects the individual assets as well as their co-dependence, also the relative weighting of different assets changes. Finally, the agent's marginal utility price of an option, also known as the Davis price after \cite{davis1997option}, can both decrease or increase as model uncertainty appears. It always decreases if an agent was not trading in the baseline model. But if the agent trades, than the option can effectively provide insurance for certain adverse model changes and may become more valuable than before. 

The rest of the paper is organised as follows. We first briefly introduce the baseline setting and its classical analysis. Then, in section \ref{sec:uncertainty_ways}, we give a high-level overview of the literature and topics related to model uncertainty. After that, we proceed to introducing our approach and discuss our main findings. These are then illustrated with worked-out examples in section \ref{sec:examples}. A careful presentation of all the assumptions and theorems then ensues. We look at the optimal investment problem and the marginal utility pricing of options in sections \ref{sec:robust eum} and \ref{sec:robust Davis price} respectively.  Appendix \ref{app:marg_util} presents proofs of the classical results recalled in section \ref{sec:marginal_util}, while the supplement \cite{obloj2021} contains detailed explicit computations for all of the examples of section \ref{sec:examples}.

\section{The baseline optimal investment and option pricing problems}\label{sec:baseline}

We consider a one-period financial market with $d$ risky assets, for some $d\geq 1$. We assume there is also a deterministic riskless asset and, without loss of generality, express all prices in its discounted units. The prices of risky assets today and at time one are given by the vectors $S_0$ and $S_1$ and we let $X=S_1-S_0$.  We fix a closed convex state space $\mathcal{S}\subseteq \R^d$ and we only consider $X\in \cS$. This can be used, e.g., to ensure that in all considered models the prices remain positive. Throughout the paper, \emph{a model} is synonymous to the distribution of $X$: a probability measure supported on $\cS$. We fix one such $\P$ and refer to it as the \emph{baseline model}. To simplify the notations, we work on the canonical probability space $(\Omega, \cF,\P)$ with $\Omega=\cS$, $X(\omega)=\omega$, $\cF_0= \{\emptyset, \Omega\}$, $\cF_1=\cF=\sigma(X)$.
 
An agent is able to trade in the market using strategies $\pi\in \Adm\subseteq \R^d$, where $\Adm$ is closed and convex. 
In particular, if the first $k$ assets are traded with no restrictions and the remaining assets are not traded, then $\Adm=\R^k\times \{0\}^{d-k}$. The scalar product of $X$ and $\pi$ is denoted $\langle X,\pi\rangle=\sum_{i=1}^d \pi_i X_i$. 

\subsection{Optimal investment}\label{sec:EUM}
We consider an agent endowed with a utility function $u$. Throughout the paper we make the following:
\begin{assumption}[Standing Assumption]\label{Ass:4a} \label{Ass:stand}
The state space $\cS$ and the action space $\cA$ are closed convex subsets of $\R^d$. The baseline model $\P$ is supported on $\cS$ and satisfies the following no-arbitrage and non-degeneracy condition:
$$\forall\pi\in \R^d\setminus \{0\}\quad \P (\langle  X,\pi\rangle>0)>0.$$
$\cD$ is an open convex subset of $\R$ and $u:\mathcal{D} \to \R$ is strictly concave, strictly increasing, continuously differentiable and bounded from above. The state, action and domain sets are compatible in that there exists $\epsilon_0>0$ such that  $$\{\langle x, \pi\rangle + \eta :\ x\in \mathcal{S}, \pi\in \Adm, |\eta|<\epsilon_0\}\subseteq \mathcal{D}.$$
We fix a function $g:\cS\to \R$ and assume that either $\cD=\R$ or $g$ is bounded.
\end{assumption}

Consider the optimal investment problem for an agent who is maximising their expected utility of wealth, i.e., 
\begin{equation}\label{eq. Merton}
    V= \sup_{\pi\in \Adm} \E_{\P}\left[u\left(\langle  X,\pi \rangle \right)\right].
\end{equation}
Let us remark that in order to simplify our notation we have not explicitly accounted for interest rates or an initial investment capital in the above formulation. These can be easily incorporated by considering discounted stock prices and composing $u$ with an affine function. Under our assumptions, the optimal portfolio $\pi^\opt$ for $V$ exists, is unique and, for most choices of $\cA$ considered in the literature, belongs to the (relative) interior $\interior{\Adm}$ of $\Adm$. It is characterised through the first order condition
\begin{equation*}
    \E_{\P} \left[ X \cdot u^\prime\left( \langle X, \pi^\opt \rangle \right)\right] =0.
\end{equation*}
We can express this differently by saying that $X$ is a $\Q_u$-martingale, where
\begin{equation}\label{eq:u mart mg}
 \frac{d \Q_u}{d \P} = \frac{u^\prime\left( \langle X, \pi^\opt \rangle\right)}{\E_{\P}\left[u^\prime\left( \langle X, \pi^\opt \rangle\right)\right]}.
\end{equation}
This idea was exploited by \cite{rogers1994equivalent} to obtain a proof of the Dalang-Morton-Willinger theorem in discrete time using a dynamic programming approach. It is also crucial for marginal utility pricing which we discuss next.

\subsection{Marginal utility price}\label{sec:marginal_util}
\cite{davis1997option} proposed an innovative way of option pricing based on marginal utility. Consider an option with payoff $\tilde g(S_0,S_1)$. Note that since $S_0$ is known and $X=S_1-S_0$, we can put $g(x)=\tilde g(S_0,x+S_0)$ and with no loss of generality simply consider an option with payoff $g(X)$. Our standing assumptions on $g$ were included in Assumption \ref{Ass:stand} and further growth assumptions will be added later. We now consider the maximisation problem
\begin{align}\label{eq:def_davis}
V(\epsilon,\d):=\sup_{\pi\in \Adm} \E_{\P}\left[ u\left(-\epsilon+ \langle X, \pi \rangle +\frac{\epsilon}{\d}g(X) \right)\,\right],
\end{align}
where $\d\ge 0, \epsilon\in \R$, and note that $V(0,\d)=V$. 
We note that $\epsilon\mapsto V(\epsilon,\d)$ is well-defined for $\epsilon>0$ small enough since 
\begin{align*}
\left\{ -\epsilon+\langle x, \pi\rangle+\frac{\epsilon}{\d}\, g(x):\ x\in \mathcal{S}, \pi\in \mathcal{A} \right\}\subseteq \mathcal{D}
\end{align*}
for $|\epsilon|$ small enough by Assumption \ref{Ass:stand}.
We denote strategies attaining the supremum in the expression above by $\pi^\opt(\epsilon)$. For notational simplicity and in order to emphasize that $\pi^\opt(0)$ does not depend on $\d$ (nor on $g$),  we still write $\pi^\opt$ for the $V$-optimiser. The marginal utility price is the one which makes the agent indifferent between buying $\epsilon$ options $g$ and the investment problem without the option $g$, where $\epsilon$ is small. This is made precise in the following definition.

\begin{definition}[Marginal utility price of \cite{davis1997option}]\label{def:davis}
\label{def:marginal}
Suppose that for each $\d>  0$, the function $\epsilon \mapsto V(\epsilon,\d)$ is differentiable at $\epsilon=0$ and $\hatd$ is a solution to
\begin{align*}
\partial_{\epsilon} V(0,\d):=\left.\partial_{\varepsilon} V\left(\varepsilon, p_{d}\right)\right|_{\varepsilon=0}=0
\end{align*}
Then $\hatd$ is called a \emph{marginal utility price} of the option $g$.
\end{definition}

Under the assumption on $u$ above and with suitable uniform integrability, $\pi^\opt(\epsilon)$ is unique and $\pi^\opt(\epsilon )\to \pi^\opt(0)$. The key result of \cite{davis1997option} is that $\hatd$ is also unique and satisfies
\begin{align}\label{eq:davis_main}
\hatd=\E_{\Q_u}\left[g(X)\right],
\end{align}
where the precise statements and proofs of these results are given in Appendix \ref{app:marg_util}. 
The Davis price can thus be interpreted as the arbitrage-free model price under the martingale measure $\Q_u$. 
It is a linear pricing rule, a consequence of considering \emph{marginal} pricing. An indifference pricing problem for a given quantity of options would naturally lead to a non-linear pricing rule. We refer to \cite{musiela2008single, henderson2004utility} and the references therein for a detailed discussion. 

While Davis' original article focuses on classical continuous time models (see also \cite{karatzas1996pricing,henderson2004utility, hugonnier2005utility}), Davis pricing in discrete time models has been investigated, e.g., in \cite{schal2000price,schal2000portfolio, schal2002markov, kallsen2002utility, rasonyi2005utility}. 
More broadly, \eqref{eq:davis_main} gives a way to select a martingale measure used for pricing and can be seen as part of a broader literature considering such measures for incomplete markets. 
In the case of exponential utility, $u(x)=-\exp(- \gamma x)$, $\Q_u$ is also the \emph{minimal relative entropy} martingale measure of \cite{frittelli2000minimal, rouge2000pricing}. Indeed, for any $\Q\in \cP(\cS)$, using the explicit density in \eqref{eq:u mart mg}, we have
$$
H(\Q \mid \P) := \E_{\Q}[\ln(d\Q/d\P)] = H(\Q\mid \Q_u) - \E_{\Q}[\gamma\langle X,\pi^\opt\rangle]-\ln \E_{\P}[u^\prime \left( \langle X, \pi^\opt\rangle\right)/\gamma].
$$
Restricting to martingale measures $\Q$, the middle term on the right disappears and $H(\Q\mid \Q_u)\geq 0$ so that
\begin{align*}
\inf_{\Q} H(\Q\mid \P)\geq - \ln \E_{\P}[u^\prime \left( \langle X, \pi^\opt\rangle\right)/\gamma],
\end{align*}
with equality when $\Q=\Q_u$, as required. In the literature on pricing in incomplete markets, other martingale measures have also been discussed, such as the \emph{minimal martingale measure} of \cite{follmer1991hedging} or the \emph{variance optimal measure} of \cite{Schweizer.1996}. 

Despite the abundant literature on this topic, to the best of our knowledge, even in the simple one-period framework we discuss here, the impact of model uncertainty on the Davis price has not been investigated. This is the main motivation behind our work which we now discuss in more detail.

\section{Model misspecification}\label{sec:uncertainty_ways}

Model uncertainty and robustness to model perturbations or misspecification are of paramount importance in any modelling context. In particular, in quantitative finance and economics,  
model uncertainty has been an active topic of research, not least in the wake of the 2008 financial crisis. 
While we can not hope to do justice here to all the relevant works, we will highlight some key developments which put our contributions in their historical perspective.

To guide our discussion, it is helpful to establish, following \cite{hansen2016ambiguity}, a taxonomy of levels of model uncertainty. Going back to \cite{knight2012risk}, but also \cite{keynes1921treatise,arrow1951alternative}, we speak of \emph{risk} as the probabilistic (uncertain) nature of the future states of the world captured \emph{within a model} and we speak of \emph{uncertainty} where this nature is not captured by the proposed model. This could be because we picked a wrong model from a given ensemble: if the question is about which model from a given class to pick we speak of \emph{model ambiguity}. Finally, if we are uncertain about which class of models to use or how to model future outcomes at all, we speak of \emph{model misspecification}. 
The distinction between ambiguity and misspecification may appear artificial. The former is a special case of the latter and, by taking a large infinite dimensional class of models, we can recast the latter as the former, see also \cite[Remark 4.1]{cont2010robustness}. Nevertheless, the distinction often serves as a useful taxonomy to guide a general discussion. 

Without being too prescriptive, we think of model ambiguity when the class of models is restricted to a specific, often parametric, family. The main  advantage of this approach is its tractability. Parameters are typically constants, or processes in a dynamic setting, taking values in some finite dimensional compact set. 
In a seminal contribution \cite{merton1969lifetime} considered the optimal investment problem of an agent trading in risky asset modelled using a geometric Brownian motion. 
Specifically, he considered an agent endowed with a power utility and wanting to maximise the expected utility of their consumption, a criterion justified by the axiomatic decision-theoretic works of \cite{von1947theory,savage1951theory}. \cite{merton1969lifetime} derived the optimal portfolio value and trading strategy explicitly. In consequence, even if Merton worked with a fixed baseline model, his results included \emph{implicitly} a sensitivity analysis with respect to model parameters. Since then, parameter uncertainty has been considered explicitly in a great number of papers, see for example \cite{rogers2001relaxed,chen2002ambiguity,maenhout2004robust,kerkhof2010model,hernandez2006robust,biagini2017robust,balter2020pricing} and the references therein. The classical expected utility maximisation problem, as considered by \cite{merton1969lifetime}, is typically replaced by a maxmin formulation. This paradigm is also widely adopted in the robust control literature, see \cite{sirbu2014note, bayraktar2016robust}, and can be seen as a two-player game setting: the agent picks their best strategy to play against the nature who decides on adverse choice of model $\P$ from a set $\cP$. An axiomatic decision-theoretic justification of this criterion was provided by \cite{gilboa1989maxmin}, building on earlier contributions, including \cite{anscombe1963definition, ellsberg1961risk} and \cite{schmeidler1989subjective}.

We note that in a dynamic multi-period setting, the agent can learn and reduce their model uncertainty. This led \cite{bielecki2019adaptive} to develop an adaptive robust control approach and was exploited in a forward utility setting by \cite{kallblad2018dynamically}. It is also the cornerstone to the Bayesian model averaging and updating approach, see \cite{hoeting1999bayesian} or \cite{karatzas2001bayesian} for a continuous-time approach, where the drift is filtered out. In this approach one fixes a family of models $\mathcal{M}$ whose corresponding parameters are denoted by $\theta \in \Theta(M)$, for $M\in \mathcal{M}$. The Bayesian observer has two levels of prior beliefs, namely a prior distribution on the models denoted by $\mu_{\text{prior}}$ and a prior distribution of the model parameters given the model, denoted by $\P(\theta \mid M)$. The posterior probability distribution $\mu_{\mathrm{post}}$ on the space $\mathcal{M}$ is then computed according to Bayes' rule
\begin{align*}
d\mu_{\mathrm{post}}(M \mid y)=\frac{\int_{\Theta(M)} \P(y\mid \theta , M)d \P(\theta \mid M) \cdot d\mu_{\mathrm{prior}}(M)} {\int_{\mathcal{M}} \int_{\Theta(N)} \P(y\mid \theta , N)d\P(\theta \mid N) d\mu_{\mathrm{prior}}(N)}.
\end{align*}
This gives rise to Bayesian updating of optimal investment problems. While appealing because of its theoretical simplicity, this modelling approach is often too sophisticated from a practical point of view, both to specify the priors and to compute the posteriors, see \cite{cont2006model}.

We classed the works discussed so far as dealing with model ambiguity. We think of model misspecification when possible perturbations of the baseline probability measure are not restricted to a parametric family, although they could be constrained, e.g., via calibration to given  market quoted prices of assets. This typically results in an infinite dimensional set of models to consider. To quote \cite[Section 10]{anderson2003quartet}, ``we envision this approximating model to be analytically tractable, yet to be regarded by the decision maker as not providing a correct model of the evolution of the state vector. The misspecifications we have in mind are small in a statistical sense but can otherwise be quite diverse."
We thus believe that the parametric ambiguity is rarely going to capture the true nature of model uncertainty facing a modeller. When we speak of \emph{model uncertainty} we mean it in the broadest meaning of \emph{model misspecification} and this is our prime interest here. 

We note that \cite{gilboa1989maxmin} did not provide specific insights into how the set of models $\cP$ in the maxmin criterion should be selected. In most of the works mentioned so far, its elements formed a parametric family of probability measures. We start our discussion of the literature dealing with model misspecification using the generic set of models which are absolutely continuous with respect to the baseline measure $\P$: $\cP=\{\tilde{\P}: \tilde{P}\ll \P\}$. It is natural to consider choosing attainable wealth $\langle X, \pi \rangle$ to maximise a Lagrangian criterion
\begin{align}\label{eq:entropy}
\inf_{\tilde{\P}\in \cP} \left[ u(\langle X, \pi \rangle)  +\lambda L\left( \frac{d\tilde{\P}} {d\P} \right)\right]
\end{align}
for some penalty function $L:\R\to \R$. In decision theory, this criterion is known as multiplier preferences, see \cite{maccheroni2006ambiguity}, and includes the maxmin criterion (for a penalty function taking only the values of zero and infinity), see also \cite{hansen2001robust, uppal2003model, lam2016robust}. 
In continuous time models, such an approach can also be used to cover drift uncertainty for SDEs. 
Building on the earlier results of \cite{kramkov1999asymptotic} and the works on risk measures, \cite{schied2005duality} developed this robust optimal investment problem and subsequent works focused on more particular setups, see \cite{hernandez2006robust}. More recently, motivated by the above, 
\cite{cohen2017data}  considers the problem
\begin{align*}
\inf_{\tilde{\P}\in \mathcal{P}} \left( \E_{\tilde{\P}} [u(\langle X, \pi\rangle, \bold{x})]-\alpha_{\mathcal{P}\mid \bold{x}}(\tilde{\P})\right)
\end{align*}
for some observed data $\bold{x}$, a set of measures $\mathcal{P}$ and a likelihood function $\alpha_{\mathcal{P}\mid \bold{x}}$, depending both on the choice of models $\mathcal{P}$ and the data $\bold{x}$. This offers a novel approach to finding a robust estimate for $\E_{\P}[u(\langle X, \pi \rangle)]$, including both the likelihood of a particular model $\tilde{\P}$ and the information coming from the data $\bold{x}$. Finally, note that in the particularly well studied case of $L(x)=\log(x)$, i.e., the relative entropy penalty, an essentially equivalent criterion is given by 
\begin{align*}
\inf_{\tilde{\P}\in B^{\mathrm{KL}}_\delta(\P) } \E_{\tilde{\P}} \left[ u(\langle X, \pi\rangle ) \right],\quad \text{where } B^{\mathrm{KL}}_\delta(\P)=\{\tilde{\P} \ll \P:\ H(\tilde{\P}\mid \P)\le \delta\}
\end{align*}
known as \emph{constraint preferences}, see \cite{hansen2001robust,lam2016robust} and section \ref{sec:WVP} below. While often tractable, approaches of this kind are restrictive since any measure in a ball in $\mathcal{P}$ around $\P$ has to be absolutely continuous with respect to $\P$. 

Recognising the important limitations of the above approaches, a rich stream of literature in mathematical finance focuses on the so-called non-dominated setting for model uncertainty, i.e., considering $\cP$ whose elements are not all absolutely continuous with respect to one baseline model. This started with the so-called uncertain volatility models of 
\cite{avellaneda1995pricing, lyons1995uncertain}. The main focus has since been on pricing and hedging questions from specific option types, e.g., 
\cite{hobson1998robust,cox2011robusta,galichon2014stochastic}, to more holistic duality results, e.g., \cite{beiglbock2013model,bouchard2015arbitrage, hou2018robust}. It is impossible to do this research justice here and we refer the reader to the discussion in \cite{burzoni2019pointwise}. Equally, the maxmin approach to the expected utility maximisation in a non-dominated setting has been considered in a number of works, see  \cite{denis2007utility, rasonyi2005utility, nutz2014utility, neufeld2018robust, carassus2019robust}, usually focusing on existence and uniqueness of optimal investment strategies and dual representations.\\

Finally, we close this short discussion by noting that another stream of literature in which the \emph{de facto used} model is a small perturbation of the baseline model is found among papers on expected utility maximisation under transaction costs, see \cite{cvitanic1996hedging, kallsen2010using, czichowsky2016duality}. Using a convex duality approach these authors derive a solution to the optimal investment problem under proportional transaction cost in continuous time by constructing a so-called shadow price process, i.e., a semi-martingale process taking values within the bid-ask spread region, whose solution to the optimal investment problem in the frictionless market exists and coincides with the solution of the original problem under transaction cost. In this sense the shadow price can be interpreted as a process living in a small neighbourhood of the original price process and the problem with transaction costs is included in a suitable maxmin formulation of the one without transaction costs.

\section{Distributionally robust approach - summary of the main results}
\label{sec:main results}
We propose to capture model misspecification by considering a ball, in the space of probability measures, around the baseline model $\P$. As noted above, this has most often been considered in the economics literature using the Kullblack--Leibler divergence, i.e., the relative entropy, see \cite{lam2016robust} for general sensitivity results, \cite{calafiore2007ambiguous} for applications in portfolio optimisation and \cite{hansen2001robust} and the references therein for a broader context. KL-divergence has good analytic properties and often leads to closed-form solutions. However, it only allows one to consider measures which are absolutely continuous with respect to the baseline model $\P$. In particular, if the latter is the empirical measure supported on $N$ observations, any meaningful analysis requires ad hoc structural assumptions and modifications, creating an additional layer of uncertainty. 

In this paper we propose to use the Wasserstein distance. Coming from the optimal transport theory, this distance is known to lift the natural distance on the state space $\cS$ to the space of measures on $\cS$. It is widely used in many application fields, from image processing, see for example \cite{swoboda2013convex,tartavel2016wasserstein} to statistical analysis of the Wasserstein barycenters, see for example \cite{bigot2018characterization}. We refer to \cite{peyre2019computational} and the references therein for a comprehensive overview of the field. While harder to handle analytically, it is more versatile and does not require any additional structural assumptions. It is well motivated from statistical theory and asymptotic consistency results, see \cite{fournier2015rate, esfahani2015data}. 
 
We first introduce some additional notation. We endow $\R^d$ with the Euclidean norm $|\cdot|$ and write $\interior{\Gamma}$ for the interior of a set $\Gamma$. 
We let $\mathcal{P}(\mathcal{S})$ denote the set of all (Borel) probability measures on $\mathcal{S}$ and, for $p\in [1,\infty)$,  
$$\mathcal{P}_p(\mathcal{S}):=\left\{\P \in \mathcal{P}(\mathcal{S}) \ : \ \E_{\P} [|X|^p] <\infty\right\}.$$
For $\P,\tilde{\P}\in \mathcal{P}_p(\mathcal{S})$, we define the $p$-Wasserstein distance via 
\begin{equation*}
    W_p(\P, \tilde{\P})=\inf\left\{\E_{\gamma}[ |X-Y|^p]\colon \gamma \in \mathrm{Cpl}(\P,\tilde{\P}) \right\}^{1/p},
\end{equation*}
where $\mathrm{Cpl}(\P,\tilde{\P})$ is the set of all probability measures $\gamma$ on $\mathcal{S}\times\mathcal{S}$ with first marginal $\gamma_1:=\gamma(\cdot\times\mathcal{S})=\P$ and second marginal $\gamma_2:=\gamma(\mathcal{S}\times\cdot)=\tilde{\P}$ and $(X,Y)$ is the canonical process on $\cS\times \cS$. We also define the $W_\infty$ distance for $\P,\tilde{P}\in \mathcal{P}(\mathcal{S})$ via
\begin{align}\label{eq:Winfty}
\mathcal{W}^{\infty}(\P, \tilde{\P})&:= \inf_{\gamma\in \mathrm{Cpl}(\P,\tilde{\P})} \gamma\text{-ess-sup }|X-Y|\nonumber \\
&=\inf\left\{\epsilon >0 \ \Big| \ \P(B) \le \tilde{\P}(B^{\epsilon}), \ \tilde{\P}(B) \le \P(B^{\epsilon}) \ \forall B \in \mathcal{B}(\R_+^d) \right\}.
\end{align}
The Wasserstein ball of size $\delta\geq0$ around $\P$ is denoted
\begin{equation*}
    B_{\delta}(\P) = \left\{\tilde{\P} \in \mathcal{P}(\mathcal{S}) :  W_p(\P,\tilde{\P}) \leq \delta  \right\}.
\end{equation*}
Let us fix some $p\in(1,\infty]$ and let $q:=p/(p-1)$ so that $1/p+1/q=1$, with the usual convention that $q=1$ if $p=\infty$. 
\begin{remark}\label{rem:p}
The choice of $p$ is part of the agent's preferences, like the utility function $u$, and has important consequences. As the Wasserstein distances $W_p$ are increasing in $p$, the choice of $p$ influences the size of the Wasserstein ball $B_\delta(\P)$ considered. A more ambiguity-averse agent might choose a lower Wasserstein order than a less ambiguity-averse one. They would then obtain a ball $B_\delta(\P)$ which is larger and thus allows for greater perturbations to the baseline model. On the technical side, we will see in section \ref{sec:robust eum} that the choice of $p$ is linked to the growth rate of the utility function $u$.
\end{remark}

Throughout the rest of the paper we make the following assumption:
\begin{assumption}\label{Ass:4}
$\P\in \cP_p(\cS)$ and the boundary of $\cS\subset \R^d$ has $\P$--zero measure.
\end{assumption}
We consider an agent with variational preferences with respect to $B_\delta(\P)$, i.e., the agent considers the maxmin criterion
$$ \sup_{\pi\in \cA} \inf_{\tilde \P\in B_\delta(\P)}\E_{\tilde \P}[u(\langle X,\pi\rangle)]$$
of \cite{gilboa1989maxmin} and picks their best strategy against the nature who picks the worst model from within the ball $B_\delta(\P)$. We are interested in understanding how this changes the classical problems considered in Section \ref{sec:baseline}. Specifically, we want to quantify how the value and the actions of the agent change for small $\delta$. In this section, we only give an overview of the main results and highlight some of their consequences. The rigorous statements require assumptions which, even if not surprising or restrictive, are often cumbersome to state. Consequently, we defer these to later sections. 

\subsection{Wasserstein variational preferences}\label{sec:WVP}

We start with a simple remark on variational preferences of \cite{gilboa1989maxmin}. 
Consider an agent using a strategy $\pi$ leading to wealth $\langle X,\pi\rangle$. The agent's utility function $u$ and their choice of $p$ and $\delta$ imply a preference relation on the space of models $\cP_p(\cS)$, i.e., distributions of $X$, via
\begin{align}\label{eq:Wass pref rel}
\P \succeq_W \check{\P} \quad \Leftrightarrow \quad \min_{\tilde{\P}\in B_\delta(\P)} \E_{\tilde{\P}}[u(\langle X, \pi \rangle)] \ge \min_{\tilde{\P}\in B_\delta(\check{\P})} \E_{\tilde{\P}}[u(\langle X, \pi \rangle)].
\end{align}
Assuming that $u$ is differentiable and applying \cite[Theorem 2]{bartl2020robust} we obtain that, up to $o(\delta)$,
$\P \succeq_W \check{\P}$ if and only if 
\begin{align*}
&\E_{\P}[u(\langle X, \pi\rangle))] - \delta |\pi|\left(\E_{\P}\left[|u^\prime(\langle X, \pi \rangle))|^q\right]\right)^{1/q} \ge \E_{\check\P}[u(\langle X, \pi\rangle)]  - \delta |\pi| \left(\E_{\check\P}\left[|u^\prime(\langle X, \pi\rangle)|^q\right]\right)^{1/q}.
\end{align*}
In particular these preferences first take the expectation of $u$ under $\P$ and $\check\P$ into account. If these align, then the norm of the first derivative of $u$, or the subjective pricing kernel (stochastic discount factor) is decisive for small $\delta>0$. We believe that the  fact that marginal quantities appear and control the assessment of small perturbations of the model seems both natural and desirable. 

To the best of our knowledge, \eqref{eq:Wass pref rel} is the first instance of using the Wasserstein balls to define preferences. Instead, in the economics literature, balls with respect to relative entropy have been suggested. \cite{hansen2001robust} call these \emph{constraint preferences} and observed that these are equivalent to using multiplier preferences with a penalty proportional to the relative entropy, see also \cite{maccheroni2006ambiguity}. Accordingly, if we consider
\begin{align}\label{eq:Wass pref rel1}
\P \succeq_{KL} \check{\P} \quad \Leftrightarrow \quad \min_{\tilde{\P}: H(\tilde \P|\P)\leq \delta^2} \E_{\tilde{\P}}[u(\langle X, \pi\rangle)] \ge \min_{\tilde{\P}: H(\tilde \P|\check{\P})\leq \delta^2} \E_{\tilde{\P}}[u(\langle X, \pi\rangle)]
\end{align}
then using the results of \cite{lam2016robust}, assuming finiteness of exponential moments, it follows that, up to $o(\delta)$, $\P\succeq_{KL}\check{\P}$ if and only if 
\begin{align*}
&\E_{\P}[u(\langle X, \pi\rangle ))] - \delta \sqrt{2\Var_\P(u(\langle X, \pi\rangle))} \ge \E_{\check\P}[u(\langle X, \pi\rangle)] - \delta \sqrt{2\Var_{\check{\P}}(u(\langle X, \pi\rangle))},
\end{align*}
where $\Var_\P(u(\xi))$ is the variance of $u(\xi)$ under $\P$. In particular, in contrast to $\succeq_W$, marginal quantities do not appear but instead the second moment of the utility $u(\xi)$ is decisive.

\subsection{Distributionally robust expected utility maximisation}\label{sec:REUM}

Consider a distributionally robust expected utility maximisation problem of the form
\begin{equation}\label{eq. minimax}
    V\left( \delta \right) : =\ \sup_{\pi\in \Adm} \inf_{\tilde{\P} \in B_{\delta}(\P)}\E_{\tilde{\P}}\left[u\left(\langle  X,\pi \rangle \right)\right].
\end{equation}
This problem can be written in a number of equivalent ways, using different representations of the $B_\delta(\P)$-ball. In particular, when $\P$ is non-atomic, denseness of Monge couplings among all couplings, see \cite{pratelli2007equality}, implies that
$$V(\delta) =\ \sup_{\pi\in \Adm} \inf_{f: \E_{\P}[|f(X)|^p]\leq \delta^p}\E_{\P}\left[u\left(\langle  X+f(X),\pi \rangle \right)\right].$$
For $\delta\ge 0$ we denote by $\pi^\opt_\delta\in \Adm^\opt_{\delta}$ the optimizers for $V(\delta)$. 
In particular, $V(0)=V$ and $\pi^\opt_0=\pi^\opt$ as defined in Section \ref{sec:EUM}. In Theorem \ref{thm:EUMsens} we show that
\begin{align*}
V'(0)=\lim_{\delta\to 0} \frac{V(\delta)- V(0)}{\delta} = -|\pi^\opt| \left( \E_{\P} \left[|u^\prime(\langle X, \pi^\opt \rangle)|^q\right] \right)^{1/q},
\end{align*}
where $q=p/(p-1)$. In particular, the value function always decreases when model misspecification occurs. This is intuitively obvious: the agent considers a max-min problem and hence is model-misspecification-averse. The loss in value is proportional to the number of shares held $|\pi^\opt|$ and to the norm of the agent's pricing kernel. The first accounts for the agent's total exposure to the market and the second for the relative distance between the physical measure and the agent's subjective pricing measure. We note also that $V'(0)$ is decreasing in $q$, i.e., is increasing in $p$. Again, this is intuitive: larger $p$ corresponds to a stronger metric and hence smaller balls $B_\delta(\P)$, i.e., less uncertainty about the baseline model. However,  $V'(0)$ does not appear to have any other simple monotonicity properties in relation to the baseline model. In particular, changing $\P$ to increase the Sharpe ratio or the value $V$ may lead both to increasing or decreasing $V'(0)$. Arguing purely formally, we expect that $\pi^\opt$ increases with the Sharpe ratio and thus the monotonicity of $V^\prime(0)$ depends on the tradeoff between $|\pi^\opt|$ and $|u^\prime(\langle X, \pi^\opt \rangle)|$, which is usually decreasing in $\pi^\opt$. This tradeoff can lead to different behaviour for different utility functions and baseline models, see section \ref{sec:examples} for examples.
 
Next, considering the optimal trading strategy, in Theorem \ref{thm:a_deriv} we show that
\begin{equation}\label{eq:sensitivity_optimiser}
    \pi^\opt(\delta) = \pi^\opt + \pip(0) \delta + o(\delta),
\end{equation}
where the gradient $\pip(0)$ is given by
\begin{align*}
 \pip(0)  
    =&\left(\nabla_\pi^2 V(0) \right)^{-1} \cdot \frac{\pi^\opt}{\left| \pi^\opt \right|}\cdot \kappa_u,\quad \text{with}\\
    &\kappa_u= \|u^\prime(\langle X , \pi^\opt \rangle)\|_{L^q(\P)}^{1-q} \cdot \left(\E_{\P}^{} \left[\frac{\langle X,\pi^\opt \rangle u^{\prime\prime} (\langle X,\pi^\opt \rangle)+u^\prime(\langle X, \pi^\opt \rangle) }{\left| u^\prime(\langle X, \pi^\opt \rangle) \right|^{1-q} }\right]  \right).
\end{align*}
The first two terms above decide about the relative adjustments to the components in $\pi^\opt$, while $\kappa_u$ is a constant multiplier. The first term is the inverse Hessian matrix, analogue to the inverse Fisher information matrix in a statistical problem, see \cite{bartl2020robust}. It is multiplied by the relative weights in the portfolio, the second term.

\subsection{Distributionally robust marginal utility price}\label{sec:mainresults_marginalprice}
We introduce now a robust version of the marginal utility price of \cite{davis1997option}. We recall that $g$, see Assumption \ref{Ass:stand}, denotes the payoff of the option we want to price. Define 
\begin{align}\label{eq:Vpd def}
V(\delta, \varepsilon, \d)= \sup_{\pi \in \Adm} \inf_{\tilde{\P} \in B_{\delta}(\P)} \E_{\tilde{\P}} \left[ u\left(-\epsilon+\langle X, \pi \rangle+\frac{\epsilon}{\d} g(X)\right) \right].
\end{align}
\begin{definition}
Suppose that for each $\d>0$ the function $\varepsilon \mapsto V(\delta,\epsilon, \d)$ is differentiable. A number  $\hatd(\delta)$, which satisfies
\begin{align*}
\nabla_\epsilon V(\delta,0, \hatd(\delta))=0.
\end{align*}
is called a robust marginal utility price for the  uncertainty level $\delta$.
\end{definition}
Note that for $\delta=0$ this notion agrees with the marginal utility price of \cite{davis1997option},  $\hatd(0)=\hatd$ from Definition \ref{def:marginal}, and can thus be considered as its natural distributionally robust counterpart. Furthermore, in Theorem \ref{thm. marginal} below we show that it is still computed via an expectation under a subjective martingale measure, only now this choice of measure also depends on the level of uncertainty $\delta$:
\begin{align*}
\hatd(\delta)=\frac{ \E_{\P^\opt}\left[ u^\prime (\langle X,\pi^\opt_\delta\rangle)\,g(X)\, \right]}{\E_{\P^\opt} \left[ u^\prime(\langle X, \pi^\opt_\delta\rangle)\right]},
\end{align*}
where $\P^\opt$ is a minimising measure in $B_\P(\delta)$ in \eqref{eq. minimax}, the nature's optimal response to the agent's best strategy $\pi^\opt_\delta$. Interestingly, the marginal price $\hatd(\delta)$ is not monotone in $\delta$, in particular the price can both increase or decrease as model uncertainty is introduced. The behaviour of $\hatd(\delta)$ is specific to the agent and the option's payoff. This again is intuitive: there is an interplay between an agent's trading intent and their valuation of the option, while uncertainty affects both. The only case when one would expect the marginal price to always decrease when uncertainty is introduced, is when the agent does not see trading as profitable to start with, i.e., when $\pi^\opt=0$. This is confirmed by our results on the first order sensitivity in $\delta$ given in Theorem \ref{thm:marg_sens}. We find that if $\pi^\opt=0$, then 
\begin{align*}
\hatd'(0)=-\left( \E_{\P} \left[ |\nabla g(x)|^q\right]\right)^{1/q}.
\end{align*}
If $\pi^\opt\neq 0$ this sensitivity is more involved and, in particular, can be both positive or negative. Remarkably, we can still compute it in a closed form:
\begin{equation*}
\hatd'(0) = \E_{\Q_u} \left[ R_u(\langle  X,\pi^\opt\rangle) \left( \langle T(X),\pi^\opt\rangle-\langle X,\pip(0)\rangle\right) \cdot \left(g(X)-\hatd\right)- \langle \nabla g (X) ,T(X)\rangle\right],
\end{equation*}
where $\Q_u$ was given in \eqref{eq:u mart mg}, $R_u(x)=-\frac{u''(x)}{u'(x)}$ is the agent's absolute risk aversion coefficient and 
\begin{align*}
T(x):=\frac{\pi^\opt}{|\pi^\opt|} |u^\prime (\langle x,\pi^\opt\rangle)|^{q-1} \left(\E_{\P} \left[ |u^\prime (\langle X,\pi^\opt \rangle)|^q\right] \right)^{1/q-1}.
\end{align*}

\section{Examples}\label{sec:examples}

In this section we consider a number of simple examples to illustrate the notions and results presented above. We selected the examples so that most of the computations can be derived in two ways: either through a direct brute force computation, see \cite{obloj2021}, or by using results presented in section \ref{sec:main results} and stated in more detail in sections \ref{sec:robust eum}-\ref{sec:robust Davis price} below.

Specifically, throughout this section we take $d=1$ and mainly focus on $\P$ which is either binomial or Gaussian. The former is the only complete model in this one-period setting so that, in particular, the martingale measure $\Q_u$ is unique and independent of the utility function $u$. Introducing model uncertainty however, we lose market completeness, and sensitivity of Davis' price is subjective. In all of the figures, we plot sensitivities as functions of the Sharpe ratio $\mu/\sigma$ of the baseline model $\P$, where $\mu=\E_\P[X]$ and $\sigma^2=\Var_{\P}(X)$. 

\subsection{Binomial model}

Fix $a\in (0,1/2)$. Consider the baseline model $\P=a\delta_{-1}+(1-a)\delta_1$ and a log investor with initial capital equal to one, so that $u(x)=\log(1+x)$. 
For concreteness we take $\mathcal{D}=(-1,\infty)$, $\Adm=(-1+a,1-a), \mathcal{S}=[-1-a,1+a]$ and note that for $\epsilon_0=a^2/2$ we have
\begin{align*}
\{\langle x, \pi \rangle:\ x\in \mathcal{S}, \pi\in \Adm \}^{\epsilon_0}\subset (-1+a^2, 1-a^2)^{\epsilon_0} = [-1+a^2/2, 1-a^2/2]\subseteq \mathcal{D}
\end{align*}
and that any continuous function $g$ is bounded on $\mathcal{S}$.
The unique optimiser $\pi^\opt$ is then given by $\pi^\opt=1-2a$ and $V=a\log(2a)+(1-a)\log(2-2a).$\\

We now consider the robust optimal investment problem $V(\delta)$. For the case $p=\infty$ we explicitly calculate 
\begin{align*}
\pi^\opt_\delta&=a \log \left( \frac{2a}{1-\delta}\right) +(1-a)\log \left( \frac{-2+2a}{-1-\delta}\right),
\end{align*}
which in turn yields
\begin{align*}
V^\prime(0)&=-\pi^\opt.
\end{align*}
The same result follows directly from Theorem \ref{thm:EUMsens}, which also covers the case of finite $p$. We can compare this to  \cite{lam2016robust}, where uncertainty is quantified by balls in KL-divergence. Indeed, an application of \cite[Theorem 3.1]{lam2016robust} yields
\begin{align*}
\tilde{V}^\prime(0)
&=-\sqrt{2 \text{Var}(\log(1+\pi^\opt X))}\\
&=-\Bigg(2 \Big( a\log(2a)^2+(1-a)\log(2(1-a))^2\\
&\qquad\qquad -\Big( a\log(2a)+(1-a)\log(2(1-a))\Big)^2 \Bigg)^{1/2},
\end{align*}
for 
\begin{align*}
\tilde{V}(\delta):= \sup_{\pi\in \Adm} \inf_{H(\tilde \P|\P)\leq \delta^2}\E_{\tilde \P}\left[u\left(\langle  X,\pi \rangle \right)\right],
\end{align*}
see Figure \ref{fig:2} for a comparison.

\begin{figure}[h!]
\centering 
\begin{minipage}[b]{0.49\textwidth}
  \includegraphics[width=1\textwidth]{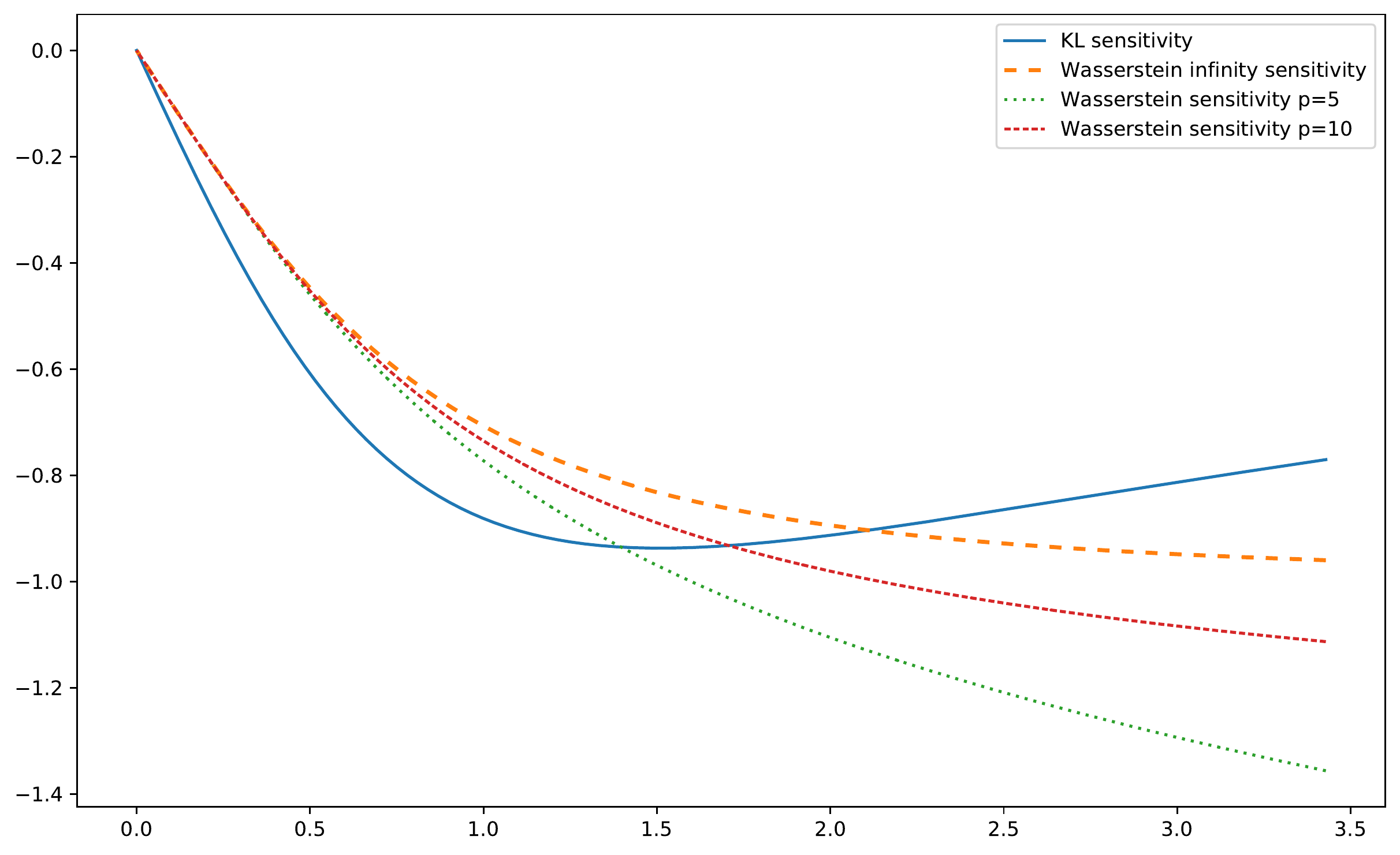}
\end{minipage}
\hfill
\begin{minipage}[b]{0.49\textwidth}
  \includegraphics[width=1\textwidth]{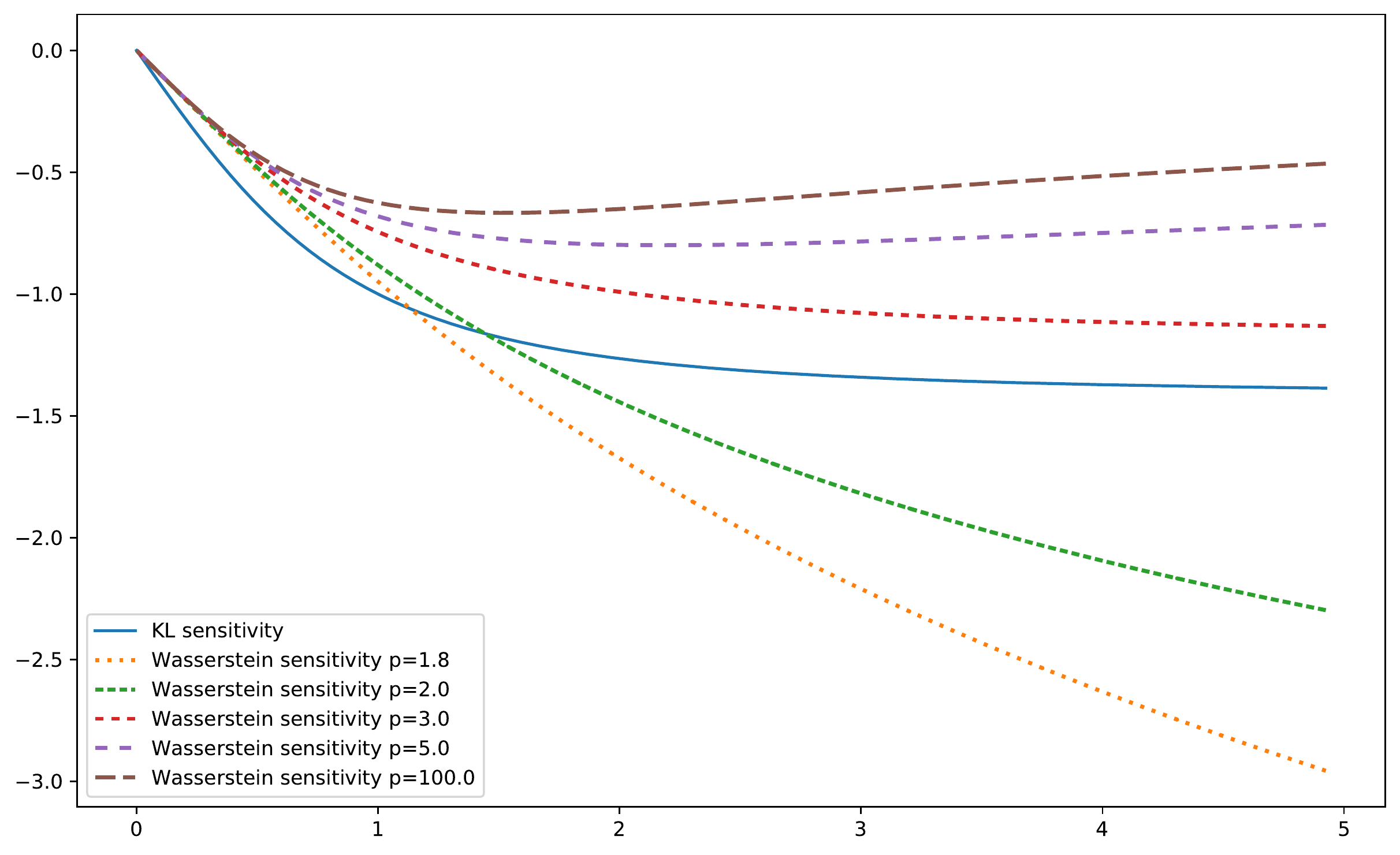}
\end{minipage}
\label{fig. conc3d}
 \captionsetup{width=\linewidth}
\caption{Wasserstein-sensitivity $V^\prime(0)$ as a function of $\mu/\sigma$ for different $p$ and KL-sensitivity $\tilde{V}^\prime(0)$ for $u(x)=\log(1+x)$ (left) and $u(x)=-\exp(-x)$ (right).}
\label{fig:2}
\end{figure}


This plot raises the question if $V^\prime(0)$ is generally a decreasing function of the Sharpe ratio $\mu/\sigma$ for all $p$ in the binomial model. This turns out not to be the case. Indeed if we choose the exponential utility function $u(x)=-\exp(-\gamma x)$ for $\gamma>0$, then 
\begin{align*}
\pi^\opt= \frac{\log\left(\frac{a}{1-a} \right)}{-2\gamma}
\end{align*}
and in particular 
\begin{align*}
V^\prime(0)
&= -\left(a \left(\frac{a}{1-a}\right)^{-q/2} + (1-a)\left( \frac{a}{1-a}\right)^{q/2}\right)^{1/q}\cdot  \pi^\opt.
\end{align*}
Clearly, $V^\prime(0)_{|a=\frac{1}{2}}=0$, as expected since $\pi^\opt_{|a=\frac{1}{2}}=0$. Consider now the asymptotics as $a\to 0$. Clearly $\pi^\opt\to \infty$. Considering the leading behaviour for the first term (in parentheses), we see that for $q\geq 2$ it diverges so that $V^\prime(0)\to -\infty$, but for $1<q<2$ it converges to zero and dominates so that $V^\prime(0)\to 0$. In particular, we see that the monotonicity of $V^\prime(0)$ depends on $p$ and, for $p>2$, $V^\prime(0)$ is in fact increasing for large Sharpe ratio $\mu/\sigma$, see Figure \ref{fig:2}. Finally, for comparison, we note that \cite[Theorem 3.1]{lam2016robust} yields
\begin{align*}
\tilde{V}^\prime(0)&=-\sqrt{2 \left(1- 4a(1-a)\right)}.
\end{align*} 

We continue our investigation with the sensitivity of the optimiser $\pi^\opt(\delta)$: a direct calculation yields $\pip(0)=-1$. Alternatively we can use Theorem \ref{thm:a_deriv} to obtain
\begin{align*}
\pip(0)=-a(1-a)\left(a^{-q+1}+(1-a)^{-q+1}\right)^{1/q-1}\cdot(a^{-q}+(1-a)^{-q}),
\end{align*}
so that the results coincide for $q=1$. See Figure \ref{fig:3} for a plot of $\pip(0)$ for different values of $p$.

\begin{figure}[h!]
\centering 
\begin{minipage}[b]{0.49\textwidth}
  \includegraphics[width=1\textwidth]{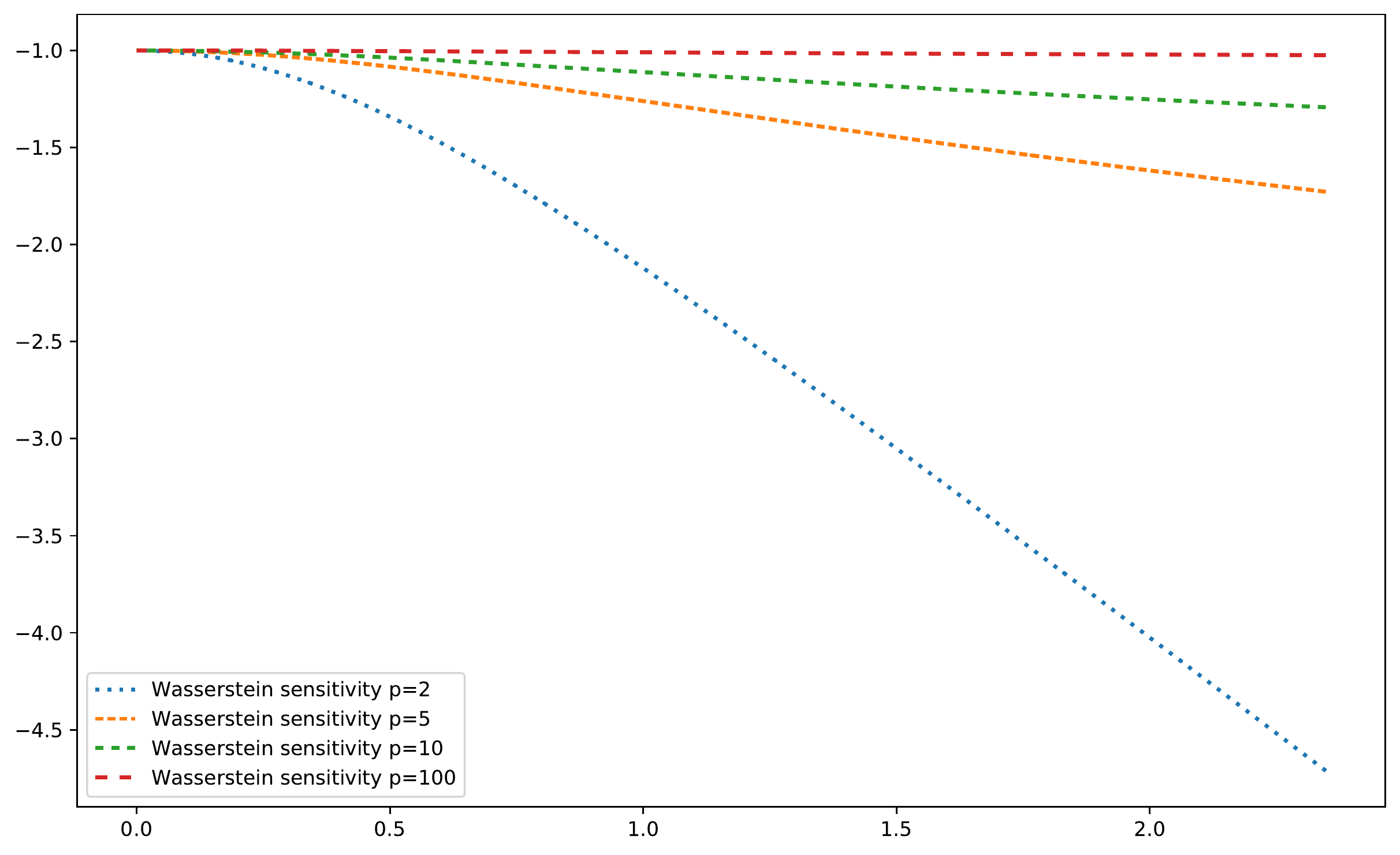}
\end{minipage}
\hfill
\begin{minipage}[b]{0.49\textwidth}
  \includegraphics[width=1\textwidth]{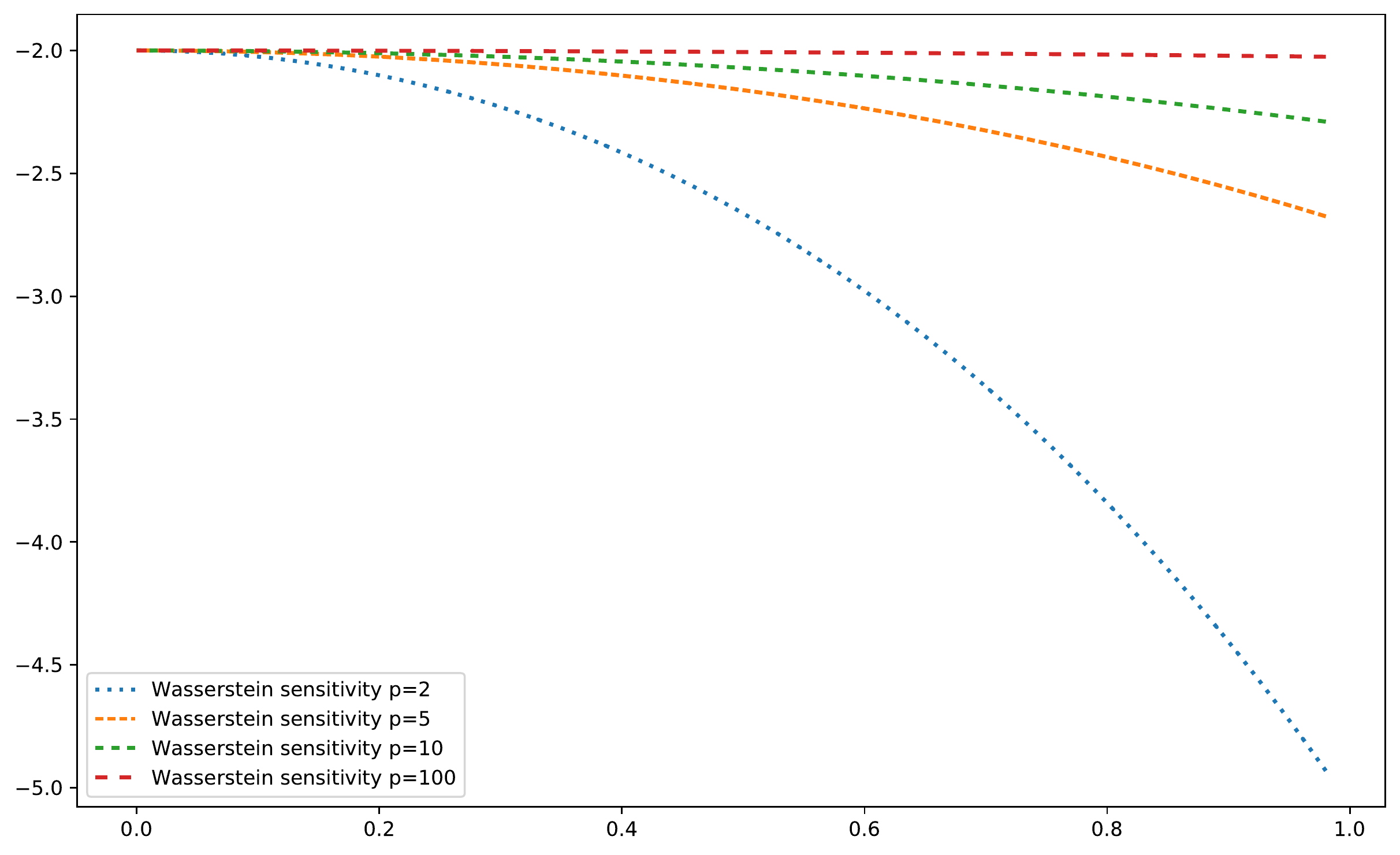}
\end{minipage}
\label{fig. conc3d}
 \captionsetup{width=\linewidth}
\caption{Wasserstein-sensitivity  $\pip(0)$ of $\pi^\opt_\delta$ (left) and Wasserstein-sensitivity $\hatd^\prime(0)$ of the Davis price $\hatd$ (right) for different $p$. Both are plotted as functions of $\mu/\sigma$.}
\label{fig:3}
\end{figure}

Let us now compare the Davis price of the baseline model with its robust counterpart. According to \eqref{eq:davis_main} it is given by $
\hatd=\E_{\Q_u}[g(X)]$. We consider first $g(x)=x^3$ as a concrete example. An easy computation shows $\hatd=0$, while we can explicitly calculate the robust Davis price $\hatd(\delta)$ as 
\begin{align*}
\hatd(\delta)=-2\delta+2\delta^3.
\end{align*}
In particular, $\hatd^\prime(0)=-2$ as can also be derived from Theorem \ref{thm:marg_sens}. For a general $p\in (1, \infty]$ we obtain
\begin{align*}
(\hatd)^\prime(0)&=-\frac{1}{4a} \left((2a)^{1-q} (1-2a) \cdot 2^{q-1}\left(a^{-q+1}+(1-a)^{-q+1}\right)^{1/q-1} -1\right) \\
&\quad + \frac{1}{4(1-a)} \left( (2(1-a))^{1-q} (1-2a) \cdot 2^{q-1}\left(a^{-q+1}+(1-a)^{-q+1}\right)^{1/q-1} +1\right)\\
&\quad -\frac{3}{2} (a^{1-q}+(1-a)^{1-q})\cdot \left(a^{1-q}+(1-a)^{1-q}\right)^{1/q-1},
\end{align*}
see Figure \ref{fig:3}.
Next consider $g(x)=x^+$ and $p=\infty$. Note that $g$ is not differentiable, but as $\P$ is only supported in $\{-1,1\}$ we can take a suitable $C^\infty$-approximation instead. 
We compute $\hatd=0.5$ and
\begin{align*}
\hatd(\delta)=\frac{1-\delta^2}{2}.
\end{align*}
In particular $\hatd^\prime(0)=0$, which is also readily seen using Theorem \ref{thm:marg_sens}. 

As we remarked before, it is however not always true that $\hatd(\delta)\le \hatd$. Take, e.g., $g(x)=|x+x_0|$ for some $x_0\in (0,1)$. For $p=\infty$, as $\hatd(\delta)$ is the expectation under the unique martingale measure concentrated on $-1-\delta$ and $1-\delta$, we conclude that
\begin{align*}
\hatd(\delta)=1-\delta^2+\delta x_0 > 1=\hatd
\end{align*}
for small enough $\delta>0$.

\subsection{Normal model}\label{sec:normal}

We now set $\P=\mathcal{N}(\mu,\sigma^2)$ and consider the utility function $u(x)=-e^{-\gamma x}$ for $\gamma>0$ and $\mathcal{D}=\R$. We obtain
$$\pi^\opt=\frac{\mu}{\gamma \sigma^2}\quad \text{and}\quad V=-\exp\left(-\frac{\mu^2}{2\sigma^2}\right).
$$
For $p=\infty$ we find 
\begin{align*}
V(\delta)=-\exp\left(-\frac{(\mu-\delta)^2}{2\sigma^2}\right).
\end{align*}
A direct calculation thus gives
\begin{align*}
V^\prime(0)=\exp\left(-\frac{\mu^2}{2\sigma^2}\right)\frac{\mu}{\sigma^2},
\end{align*}
which can be recovered by Theorem \ref{thm:EUMsens}. Figure \ref{fig:1} compares the sensitivity $V^\prime(0)$ as a function of $\mu/\sigma$ for different $\mu$. We also remark that in this case we can not compare with model uncertainty in the sense of relative entropy balls as this problem is degenerate. In fact 
 \begin{align*}
&\E_{\P}\left[\exp(\theta \left( u(\langle X,\pi^\opt\rangle)\right) \right]=\E_{\P}\left[\exp\left(\theta \exp\left(\frac{\mu}{\sigma^2} X\right)\right)\ \right]=\infty
\end{align*}
for $\theta,\mu\neq 0$, so that \cite[Assumption 3.1]{lam2016robust} is not satisfied.
\begin{figure}[h!]
\includegraphics[scale=0.4]{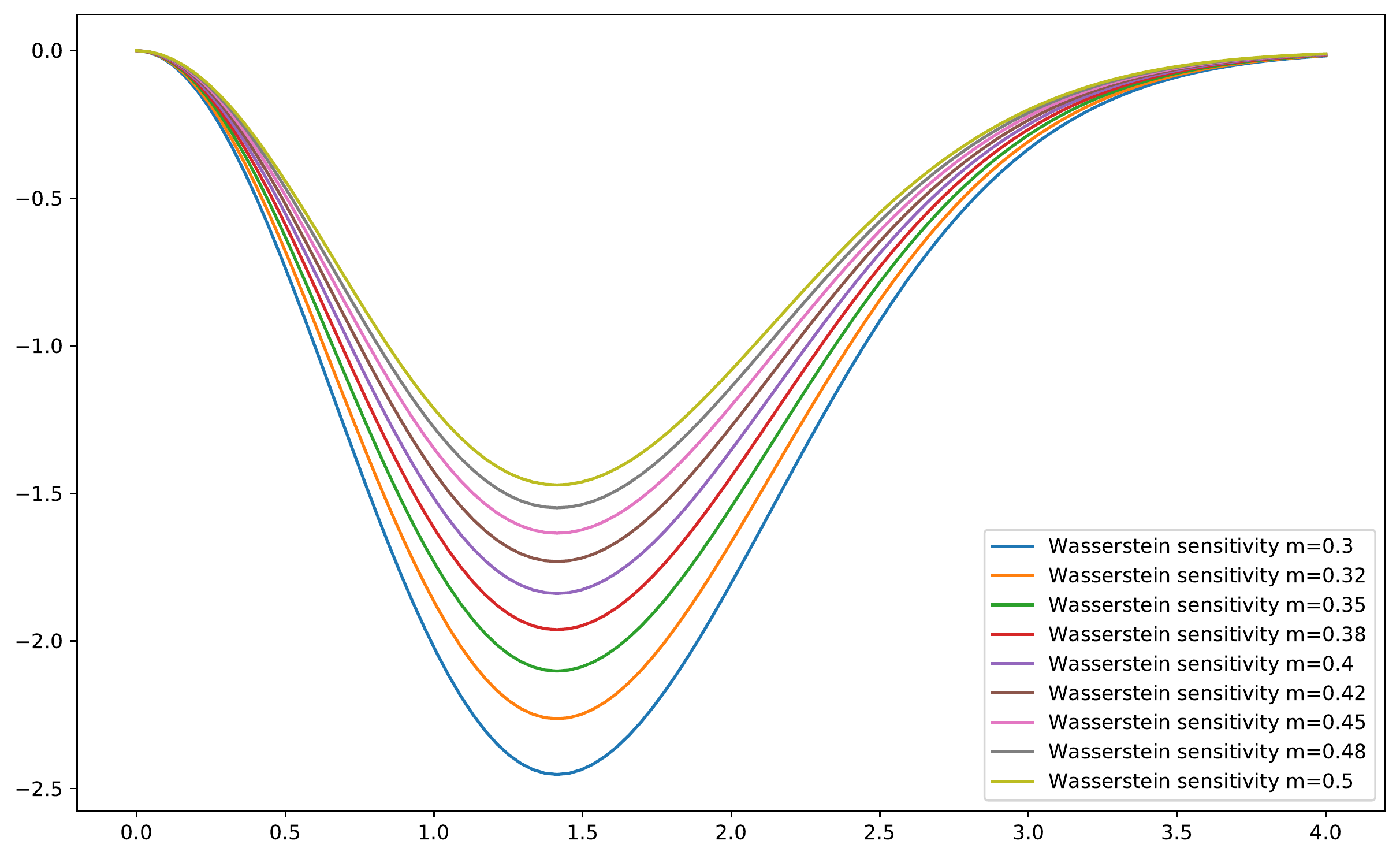}
\caption{Wasserstein-$\infty$ sensitivity $V^\prime(0)$ as a function of $\mu/\sigma$ for different $\mu$.}
\label{fig:1}
\end{figure}
Next we find that the distributionally robust optimiser is
\begin{align*}
\pi^\opt_\delta =\frac{\mu-\delta}{\gamma\sigma^2},\quad \text{and thus }\pip(0)=-\frac{1}{\gamma\sigma^2},
\end{align*}
which can again be recovered by Theorem \ref{thm:a_deriv}.

We can also compute the Davis price explicitly. E.g., for the case $g(x)=x^2$ we obtain $\hatd=\sigma^2$.
Interestingly enough this result remains unchanged for any positive $\delta>0$, i.e., $\hatd(\delta)=\sigma^2$. More generally one can show that 
\begin{align*}
\hatd(\delta)=\E_{\dot{\P}}[g(X)]
\end{align*}
for any $\delta>0$, where $\dot{\P}=\mathcal{N}(0,\sigma^2)$. This implies in particular $\hatd^\prime(0)=0$ for any payoff $g$, a result which can again be recovered from Theorem \ref{thm:marg_sens}, albeit through a tedious calculation, see  \cite[Section 1]{obloj2021}.

\subsection{Discussion of normal model in the case $p\in (1,\infty)$}

Note that in Section \ref{sec:normal} the exponential function $u(x)=-e^{-\gamma x}$ does not satisfy Assumption \ref{Ass:1}.\textit{(i)} for any $p\in (1,\infty)$, so Theorem \ref{thm:EUMsens} is not applicable. Indeed it is not hard to see that $V^\prime(0)=-\infty$ in this case: this follows by noting that for any $\pi\neq 0$
\begin{align*}
W_p(\delta_{-\text{sign}(\pi)\cdot n}, \P)\le n+\left(\E_{\P}[|X|^p]\right)^{1/p}=:f(n),
\end{align*} 
where $f(n)=O(n)$. In particular for all $n\in \N$ large enough, the probability measures $$\tilde{\P}^n=\left(\frac{\delta}{f(n)}\right)^p\delta_{-\text{sign}(\pi)\cdot n}+\left(1-\left(\frac{\delta}{f(n)}\right)^p\right)\P$$ satisfy
\begin{align*}
W_p(\tilde{\P}^n, \P)\le \frac{\delta}{f(n)} W_p(\delta_{-\text{sign}(\pi)\cdot n}, \P) \le \delta
\end{align*}
so that $\tilde{\P}^n \in B_{\delta}(\P)$. Thus 
\begin{align*}
\inf_{\tilde{\P}\in B_{\delta}(\P)} \E_{\tilde{\P}}[u(\langle X, \pi\rangle)] &\le \E_{\tilde{\P}^n} [u(\langle X, \pi\rangle]\\
&= -\left(\frac{\delta}{f(n)}\right)^p \exp(\gamma |\pi|n)-\left(1-\left(\frac{\delta}{f(n)}\right)^p\right) \E_{\P}[\exp(-\gamma\pi X)],
\end{align*}
which goes to $-\infty$ with $n\to \infty$, using $f(n)=O(n)$. In particular 
\begin{align*}
V(\delta)&= \sup_{\pi \in \Adm}\inf_{\tilde{\P}\in B_{\delta}(\P)} \E_{\tilde{\P}}[u(\langle X, \pi\rangle)]=\sup_{\pi \in \Adm}\inf_{\tilde{\P}\in B_{\delta}(\P)} \E_{\tilde{\P}}[-\exp\left(-\gamma\langle X, \pi\rangle\right)]\\
&=\E_{\tilde{\P}}[-\exp(-\gamma \langle X, 0\rangle)]=-1
\end{align*}
and so $$V^\prime(0)=\lim_{\delta\to 0}\frac{V(\delta)-V(0)}{\delta}= \lim_{\delta\to 0} \frac{-1+\exp\left(-\frac{\mu^2}{2\sigma^2}\right)}{\delta}=-\infty$$
for $\mu\neq 0$.
As we have seen above, Wasserstein-$p$-balls do not allow for enough control over the tails of the distribution when considering a utility function decreasing exponentially. There are two ways to remedy this:
\begin{enumerate}[(i)]
\item Consider an approximating sequence of utility functions $(u_\kappa)_{\kappa \in (0,1)}$ satisfying Assumption \ref{Ass:1}, such that $u_\kappa\to u$ for $\kappa\to 0$.
\item Use a different Wasserstein distance adapted to the utility function under investigation.
\end{enumerate}
We will briefly comment on both approaches.
For (i) we can formally consider $$u_\kappa(x):=\mathds{1}_{\{ x< -1/\kappa\}} \left[-e^{\gamma/\kappa}+\gamma e^{\gamma/\kappa}(x+1/\kappa)\right]+\mathds{1}_{\{x\ge -1/\kappa\}} u(x)$$ for $\kappa> 0$ and note that $u_\kappa(x)\ge u(x)$ for all $x\in \mathcal{S}$, so that $$u(x)=\inf_{\kappa> 0}u_\kappa(x)=\lim_{\kappa\to 0}u_\kappa(x).$$ 
In particular $(u_\kappa)^\prime(x)\le \gamma e^{\gamma/\kappa}$ and $(u_{\tilde\kappa})^\prime(x)\ge (u_\kappa)^\prime(x)$ for all $\tilde\kappa\le \kappa$ and all $x\in \mathcal{S}$. Thus defining 
$$V_\kappa(\delta)=\sup_{\pi \in \Adm}\inf_{\tilde{\P}\in B_{\delta}(\P)} \E_{\tilde{\P}}[u_\kappa(\langle X,\pi\rangle)]$$
 we can apply the monotone convergence theorem to obtain
\begin{align*}
\lim_{\kappa\to 0} V^\prime_\kappa(0)&= \lim_{\kappa\to 0} -\left( \E_{\P} \left[|(u_{\kappa})^\prime(\langle X, \pi^\opt \rangle)|^q\right] \right)^{1/q} |\pi^\opt|\\
&=-\gamma\left( \E_{\P}\left[\exp(-\gamma\pi^\opt X)^{q}\right] \right)^{1/q} |\pi^\opt|\\
&=-\gamma\left( \exp\left( -\mu q \gamma\pi^\opt+\frac{\sigma^2 (q\gamma \pi^\opt)^2}{2} \right) \right)^{1/q}|\pi^\opt|\\
&=-\gamma  \exp\left( -\mu \gamma\pi^\opt+\frac{\sigma^2 q(\gamma\pi^\opt)^2}{2} \right)\left(\frac{\mu}{\gamma \sigma^2}\right)\\
&=-  \exp\left( -\frac{\mu^2}{\sigma^2}+\frac{ q \mu^2}{2\sigma^2} \right)\left(\frac{\mu}{\sigma^2}\right).
\end{align*}
This formula aligns with section \ref{sec:normal} for the case $q=1$. 
Similarly, we can also calculate 
\begin{align*}
\pip(0)= -\frac{\gamma^{1/q+q-3}}{\sigma^2}\exp\left( \frac{q-1}{2} \right) \left(-\frac{\mu^2(1-q)}{\sigma^2}+1\right),
\end{align*}
which again matches the result for $q=1$ in section \ref{sec:normal}, see Figure \ref{fig:4}.
\begin{figure}[h!]
\includegraphics[scale=0.4]{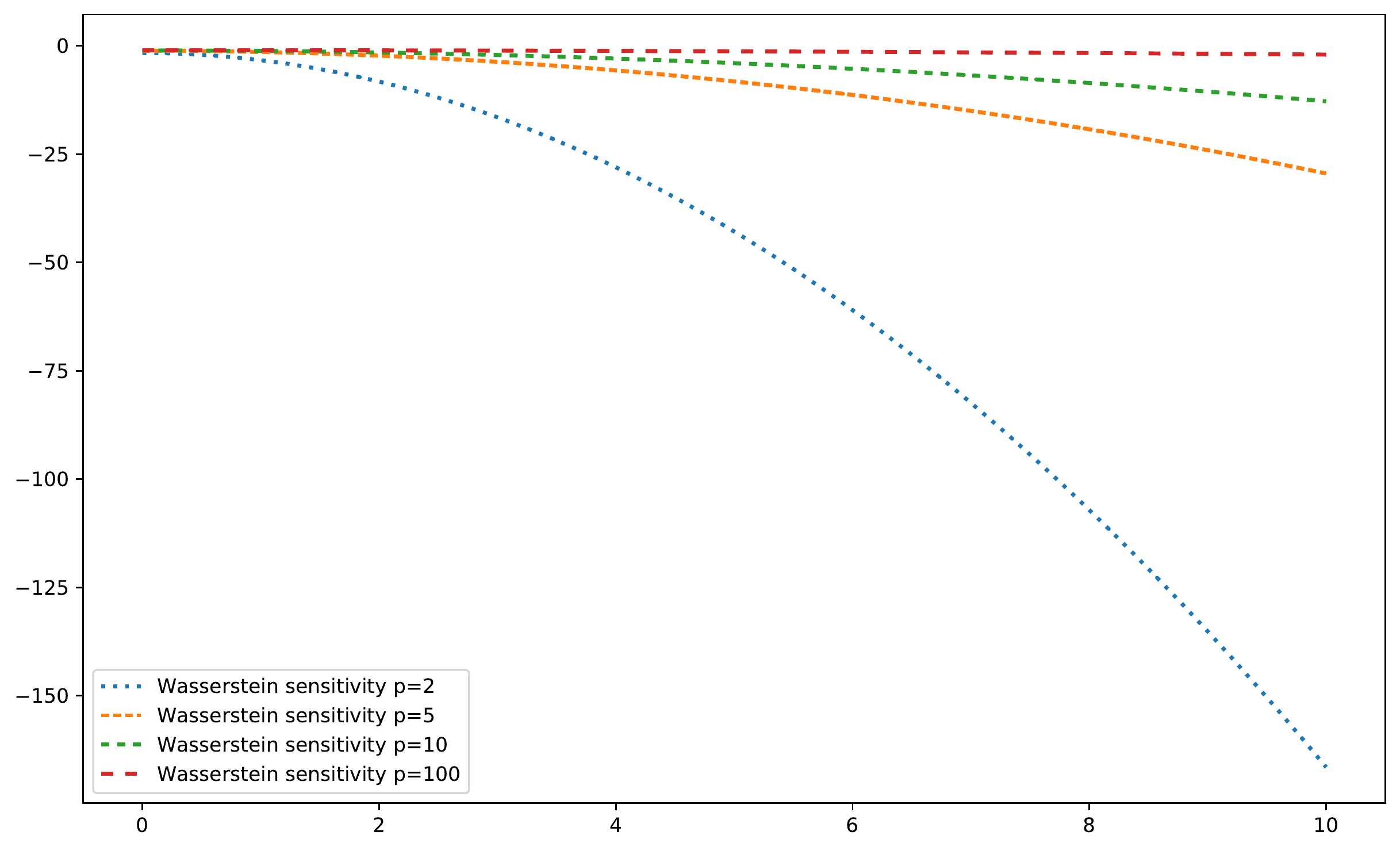}
\caption{Wasserstein-sensitivity $\pip(0)$ of $\pi^\opt_\delta$ as a function of $\mu/\sigma$ for $\sigma=\gamma=1$.}\label{fig:4}
\end{figure}
\\ Turning to (ii), assuming $u(0)=0$ we could work on generalised Orlicz hearts instead and define the Wasserstein-Luxemburg metric 
\begin{align*}
W_u(\P,\tilde{\P})&:=\inf\left\{ \int  u(|x-y|)\,\pi(dx,dy):\ \pi\in \mathrm{Cpl}(\P,\tilde{\P})\right\}\\
&=\inf\left\{ b>0:\ \int  \frac{u(|x-y|)}{b}\,\pi(dx,dy)\le 1, \text{where } \pi\in \mathrm{Cpl}(\P,\tilde{\P})\right\}.
\end{align*}
As $\|x\|:=u(|x|)$ is a norm, $W_u$ is still metric. A more general version of this definition was investigated in \cite{sturm2011generalized}. We further remark that the above definition can be connected to results of \cite{frittelli2002putting, cheridito2009risk}. From \cite[Theorem 2]{bartl2020robust} we know that
\begin{align*}
V^\prime(0)= -\| u^\prime (\langle X,\pi^\opt \rangle)\|_{u,\P}^\opt |\pi^\opt|,
\end{align*}
where $\|f \|_{u,\P}^\opt$ denotes a dual norm of $f$, which can be construed as a conjugate Orlicz norm of $f$ with respect to the probability measure $\P$. Alas, we were not able to compute either of these norms explicitly, even for the simple benchmark cases described above.\\

\subsection{Lognormal model}

Let us lastly consider the shifted lognormal distribution $\P=(1+x)_\#\exp\left(\mathcal{N}(\mu,\sigma^2)\right)$ with $\mu<-\sigma^2/2$ as well as $\Adm=[0,1]$, $\mathcal{D}=(-1,\infty)$ and $u(x)=\log(x+1)$. Here $(1+x)_{\#}\exp\left(\mathcal{N}(\mu,\sigma^2)\right)$ denotes the push forward of the distribution $\exp\left(\mathcal{N}(\mu,\sigma^2)\right)$ through the function $x\mapsto (1+x)$. In this case $\pi^\opt=0$ and for the butterfly payoff $g(x)=(x+K)^+-2x^++(x-K)^+$ with strike $K>0$ we compute the distributionally robust Davis price directly as
\begin{align*}
\hatd(\delta)&=\mathrm{BS}(1, 1, -K+\delta, \mu, \sigma)-2\mathrm{BS}(1, 1, \delta, \mu, \sigma)+\mathrm{BS}(1, 1, K+\delta, \mu, \sigma),
\end{align*}
where $\delta \geq 0$ and $\mathrm{BS}(T-t, S_0, K, \mu, \sigma)$ is the Black-Scholes call price. In consequence, $\hatd^\prime(0)$ corresponds to the partial derivative of $\mathrm{BS}(1, 1, -K, \mu, \sigma)-2\mathrm{BS}(1, 1, 0, \mu, \sigma)+\mathrm{BS}(1, 1, K, \mu, \sigma)$ w.r.t.\ $K$.

\section{Distributionally robust expected utility maximisation}\label{sec:robust eum}

We return now to the discussion of our main results and present rigorous statements of the results discussed in section \ref{sec:REUM} above.

\subsection{The value function and its sensitivity analysis}
In order to quantify the first-order sensitivity of \eqref{eq. minimax}, we start our analysis by calculating the sensitivity of the value function $V(\delta)$ using the general results obtained in \cite{bartl2020robust}. Remark that $\Adm^\opt_{0}=\{\pi^\opt\}$ follows simply from Assumption \ref{Ass:4a}. We make the following further assumption, where we recall that  $\interior{\Adm}$ denotes the interior of $\Adm$:

\begin{assumption}\label{Ass:1} The following hold:
\begin{enumerate}[(i)]
\item 
\begin{enumerate}
\item If $p<\infty$, then for every $r>0$ there exists $c>0$ such that
\begin{align*}
u^\prime (\langle x,\pi \rangle)\le c(1+|x|^{p-1})
\end{align*}
for all $x\in \mathcal{S}$ and $\pi\in \Adm$ with $|\pi|\le r$.
\item If $p=\infty$, then there exists $\tilde{\delta}>0$ such that for each $r>0$
\begin{align*}
 \E_{\P} \left[\sup_{\pi\in \Adm, |\pi|\le r} u^\prime \left( \langle X, \pi \rangle-r\tilde\delta\right)\right]<\infty.
\end{align*}
\end{enumerate}
\item  The optimiser $\pi^\opt\in \Adm^\opt_{0}$ satisfies  $\pi^\opt \in  \interior{\Adm}$.
\end{enumerate} 
\end{assumption}
In particular, we see that agent's ambiguity aversion, as captured by the choice of $p$, and their risk aversion, as captured by $u$, have to be compatible, see also Remark \ref{rem:p}. 
We obtain the following result:
\begin{theorem}\label{thm:EUMsens}
Suppose the utility function $u$ and the baseline model $\P$ satisfy Assumption \ref{Ass:1}. Then 
\begin{align*}
V'(0)=\lim_{\delta\to 0} \frac{V(\delta)- V(0)}{\delta} = -\left( \E_{\P} \left[|u^\prime(\langle X, \pi^\opt \rangle)|^q\right] \right)^{1/q} |\pi^\opt|.
\end{align*}
\end{theorem}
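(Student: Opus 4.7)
The plan is to reduce the sensitivity of the sup-inf problem $V(\delta)$ to the sensitivity of a pure infimum, then push this through the outer supremum. For each fixed $\pi\in \Adm$, set $f_\pi(x):= u(\langle x,\pi\rangle)$, so that $\nabla f_\pi(x)=\pi\, u'(\langle x,\pi\rangle)$ and hence $|\nabla f_\pi(x)|=|\pi|\,|u'(\langle x,\pi\rangle)|$. Assumption \ref{Ass:1}\emph{(i)} gives the $L^q(\P)$-integrability (and, for $p=\infty$, the uniform integrability on a neighbourhood of $\pi^\opt$) required to invoke the Wasserstein sensitivity formula of \cite{bartl2020robust}, which yields
\begin{equation*}
\inf_{\tilde\P\in B_\delta(\P)} \E_{\tilde\P}[u(\langle X,\pi\rangle)] \;=\; \E_{\P}[u(\langle X,\pi\rangle)] \;-\; \delta\,|\pi|\,\bigl\|u'(\langle X,\pi\rangle)\bigr\|_{L^q(\P)} \;+\; o(\delta).
\end{equation*}

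\textbf{Lower bound on $V'(0)$.} I would simply plug $\pi=\pi^\opt$ into $V(\delta)\ge \inf_{\tilde\P\in B_\delta(\P)}\E_{\tilde\P}[u(\langle X,\pi^\opt\rangle)]$ and use the sensitivity formula together with $V(0)=\E_{\P}[u(\langle X,\pi^\opt\rangle)]$ to obtain
\begin{equation*}
\liminf_{\delta\to 0^+} \frac{V(\delta)-V(0)}{\delta} \;\ge\; -\,|\pi^\opt|\,\bigl\|u'(\langle X,\pi^\opt\rangle)\bigr\|_{L^q(\P)}.
\end{equation*}

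\textbf{Upper bound on $V'(0)$.} Here I pick, for each $\delta>0$, a near-optimiser $\pi_\delta\in\Adm$ of $V(\delta)$ (Assumption \ref{Ass:4a} plus \ref{Ass:1}\emph{(ii)} actually gives existence in $\interior{\Adm}$). Applying the same sensitivity formula with $\pi=\pi_\delta$ and using the trivial inequality $\E_{\P}[u(\langle X,\pi_\delta\rangle)]\le V(0)$ gives
\begin{equation*}
V(\delta)-V(0) \;\le\; -\,\delta\,|\pi_\delta|\,\bigl\|u'(\langle X,\pi_\delta\rangle)\bigr\|_{L^q(\P)} \;+\; o(\delta).
\end{equation*}
To conclude, I need to show $\pi_\delta\to \pi^\opt$ as $\delta\to 0$ and that $\pi\mapsto |\pi|\,\|u'(\langle X,\pi\rangle)\|_{L^q(\P)}$ is continuous at $\pi^\opt$; continuity follows from dominated convergence under the growth bound in Assumption \ref{Ass:1}\emph{(i)}. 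For the convergence of near-optimisers, the lower-bound step already forces $V(\delta)\to V(0)$, while the sensitivity expansion gives uniform convergence on compact neighbourhoods of $\pi^\opt$ of the inner-infimum map to $\pi\mapsto \E_\P[u(\langle X,\pi\rangle)]$. Combined with strict concavity of the baseline objective (so $\pi^\opt$ is the unique maximiser) and a coercivity argument ruling out escape of $|\pi_\delta|$ to infinity (Assumption \ref{Ass:1}\emph{(i)} plus the no-arbitrage/non-degeneracy part of Assumption \ref{Ass:stand}), this yields $\pi_\delta\to\pi^\opt$. Matching the two bounds gives the stated formula.

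\textbf{Main obstacle.} The delicate point is that the $o(\delta)$ remainder in the Bartl et al. expansion is in principle $\pi$-dependent, and for the upper bound we apply it along a moving sequence $\pi_\delta$. The difficulty therefore lies in establishing that this remainder is uniform on a neighbourhood of $\pi^\opt$, which is where Assumption \ref{Ass:1}\emph{(i)} (with its split into the $p<\infty$ and $p=\infty$ regimes) is essential: it provides the integrable envelope needed for dominated convergence on the dual representation appearing in the proof of the sensitivity formula. The rest is a standard envelope-style argument combining strict concavity, continuity of the $L^q$-norm in $\pi$, and uniqueness of $\pi^\opt$.
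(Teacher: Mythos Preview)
Your proposal is correct and follows essentially the same route as the paper. The only cosmetic difference is that for $p\in(1,\infty)$ the paper invokes \cite[Theorem~2]{bartl2020robust} directly as a \emph{sup--inf} sensitivity result (so the envelope step is already packaged inside the citation), whereas you cite the inner-infimum expansion and then carry out the envelope argument by hand; for $p=\infty$ the paper in fact also reproduces the two-sided inequality argument you outline, so the proofs coincide there.
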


\begin{proof}[Proof of Theorem \ref{thm:EUMsens}]
By Assumptions \ref{Ass:4a} and \ref{Ass:1}, we have that the optimizer $\pi^\opt$ is unique, $\Adm^\opt_0=\{\pi^\opt\}$, and belongs to the interior of $\Adm$, $\pi^\opt\in \interior{\Adm}$. \\

\textit{Step 1:} In the case $p\in (1,\infty)$ we can simply apply \cite[Theorem 2]{bartl2020robust} for the function $f(x,\pi)=u(\langle x, \pi\rangle)$ to obtain
\begin{align*}
V^\prime(0)&= - \left( \E_{\P}\left[|\nabla_x f(X, \pi^\opt)|^q\right] \right)^{1/q}= - \left( \E_{\P}\left[|u^\prime(\langle X, \pi^\opt \rangle)|^q\right] \right)^{1/q} |\pi^\opt|.
\end{align*}

\textit{Step 2:} If $p=\infty$, then following \cite[proof of Theorem 2]{bartl2020robust} line by line and replacing \cite[equation (7)]{bartl2020robust} by
\begin{align*}
\sup_{\tilde{\P}\in B_{\delta}(\P)} \E_{\tilde{\P}} \left[ u^\prime ( \langle X, \pi^\opt \rangle) \right]
&\le \E_{\P} \left[ \sup_{|a|\le \delta} u^\prime ( \langle X+a, \pi^\opt \rangle ) \right]\\
&=\E_{\P}\left[ u^\prime ( \langle X, \pi^\opt \rangle-|\pi^\opt|\delta)\right]\le \E_{\P}\left[ u^\prime ( \langle X, \pi^\opt \rangle-r\delta)\right]<\infty
\end{align*}
for $|\pi^\opt|\le r$
we realise that 
\begin{align*}
\lim_{\delta\to 0}\E_{\pi^{\delta}} \left[ | u^\prime( \langle X+t(Y-X),\pi^\opt)|\right] = \E_{\P} \left[ | u^\prime( \langle X,\pi^\opt)|\right] 
\end{align*}
still holds by the dominated convergence theorem. This concludes the proof of the ``$\le$" inequality. For the ``$\ge$" inequality we can again follow the steps of \cite[proof of Theorem 2]{bartl2020robust}. In particular we note that
$
T(x)=T= \text{sign}\left( \pi^\opt\right) 
$ 
so that $\|T\|_{\infty}\le 1$ and again by the dominated convergence theorem we conclude that
\begin{align*}
\lim_{\delta\to 0} \E_{\P}\left[ u^\prime (\langle X-t\delta T(X), \pi^\opt_\delta\rangle) \langle \pi^\opt_\delta, T(X)\rangle \right]
&=\E_{\P}\left[ u^\prime (\langle X, \pi^\opt)  \right]\cdot \langle \pi^\opt, T\rangle\\
&=\E_{\P}\left[ u^\prime (\langle X, \pi^\opt) \right] \cdot  | \pi^\opt|,
\end{align*}
which concludes the proof.
\end{proof}

\subsection{The optimising strategy and its sensitivity analysis}\label{sec:optim}

Next, we  calculate the sensitivity of optimisers $\pi^\opt(\delta)\in \Adm^\opt_\delta$. In order to carry this out, we impose the following additional assumptions:

\begin{assumption}\label{Ass:2} 
The following hold:
\begin{enumerate} [(i)]
\item The function $u:\R\to \R$ is twice continuously differentiable and 
\begin{enumerate}[(a)]
\item if $p<\infty$,
$$ |x| |u^{\prime \prime} (\langle x, \pi\rangle)|+|u^\prime (\langle x, \pi\rangle)|\le c (1+|x|^{p-1-\epsilon})$$ for some $\epsilon>0$, $c>0$ and all $\pi\in \mathcal{A}$ close to $\pi^\opt$ and all $x\in \mathcal{S}$.
\item if $p=\infty$, then there exists $\tilde{\delta}>0$ such that 
\begin{align*}
\E_{\P}\left[\sup_{|z|\le \tilde \delta, |\pi|\le r}  |x+z| |u^{\prime \prime} (\langle x+z, \pi\rangle)|+|u^\prime (\langle x+z, \pi\rangle)| \right]<\infty.
\end{align*}
for all $\pi\in \mathcal{A}$ close to $\pi^\opt$ and all $x\in \mathcal{S}$.
\end{enumerate}

\item The matrix $$\nabla_\pi^2 V(0, \pi^\ast)=\E_\P\left[ XX^\top u^{\prime\prime}(\langle X, \pi^\opt\rangle)\right]$$ is negative definite.
\end{enumerate}
\end{assumption}

With these additional assumptions at hand, we can state the main result of this section as follows:

\begin{theorem}\label{thm:a_deriv}
Suppose the utility function $u$ and the baseline model $\P$ satisfy Assumptions \ref{Ass:1} and \ref{Ass:2}, and that $\pi^\opt\neq 0$. Take $\pi^\opt(\delta)\in \Adm^\opt_\delta$. Then 
\begin{equation*}
    \pi^\opt(\delta) = \pi^\opt + \pip(0) \delta + o(\delta),
\end{equation*}
where
\begin{align*}
\pip(0)  %
    &=\|u^\prime(\langle X , \pi^\opt \rangle)\|_{L^q(\P)}^{1-q}\cdot \left(\nabla_\pi^2 V(0) \right)^{-1} \cdot \frac{\pi^\opt}{\left| \pi^\opt \right|} \cdot \left(\E_{\P}^{} \left[\frac{\langle X,\pi^\opt \rangle u^{\prime\prime} (\langle X,\pi^\opt \rangle)+u^\prime(\langle X, \pi^\opt \rangle) }{\left| u^\prime(\langle X, \pi^\opt \rangle) \right|^{1-q} }\right]  \right).
\end{align*}
\end{theorem}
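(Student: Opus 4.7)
The plan is to derive $\pip(0)$ by applying an implicit function theorem to the first-order condition for the outer supremum, after upgrading the inner-infimum expansion used in Theorem \ref{thm:EUMsens} to a sufficiently uniform $C^1$-in-$\pi$ statement. Write
\[ \Psi(\pi,\delta) := \inf_{\tilde{\P}\in B_\delta(\P)}\E_{\tilde{\P}}[u(\langle X,\pi\rangle)],\qquad V(\delta)=\sup_{\pi\in\Adm}\Psi(\pi,\delta). \]
First I would show that $\pi^\opt_\delta\to\pi^\opt$ as $\delta\to 0$ by a Berge-type continuity argument: Assumption \ref{Ass:4a} gives $\Adm^\opt_0=\{\pi^\opt\}$, and Assumption \ref{Ass:1}(ii) together with $\Psi(\cdot,\delta)\to\E_\P[u(\langle X,\cdot\rangle)]$ locally uniformly forces $\pi^\opt_\delta\in \interior{\Adm}$ for small $\delta$. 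Consequently $\pi^\opt_\delta$ satisfies the first-order condition $\nabla_\pi\Psi(\pi^\opt_\delta,\delta)=0$.

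The main analytic step, and the principal obstacle, is the uniform expansion
\[ \Psi(\pi,\delta)=\E_\P[u(\langle X,\pi\rangle)]-\delta\,|\pi|\,\|u'(\langle X,\pi\rangle)\|_{L^q(\P)}+R(\pi,\delta), \]
valid on a neighbourhood $U$ of $\pi^\opt$, with both $R(\pi,\delta)/\delta\to 0$ and $\nabla_\pi R(\pi,\delta)/\delta\to 0$ uniformly for $\pi\in U$. The pointwise expansion is exactly \cite[Theorem 2]{bartl2020robust}. Writing $\P$-a.e.\ optimal perturbations as Monge transports (which is possible up to a negligible regularisation), the leading-order worst-case direction is $T^\opt_\delta(x)=-\delta\,\tfrac{\pi}{|\pi|}(u'(\langle x,\pi\rangle))^{q-1}/\|u'(\langle X,\pi\rangle)\|_{L^q(\P)}^{q-1}$, which depends continuously on $\pi$; the enhanced integrability of $|u'|$ and $|X||u''|$ in Assumption \ref{Ass:2}(i) provides the envelopes needed to differentiate the expansion and its remainder in $\pi$ under the expectation via dominated convergence. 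The case $p=\infty$ uses Assumption \ref{Ass:2}(i)(b) separately, mirroring Step 2 of the proof of Theorem \ref{thm:EUMsens}.

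Given the uniform expansion, Taylor expanding the first-order condition $\nabla_\pi\Psi(\pi^\opt_\delta,\delta)=0$ around $(\pi^\opt,0)$ gives
\[ 0=\nabla_\pi\E_\P[u(\langle X,\pi^\opt\rangle)]+\nabla_\pi^2 V(0)\,(\pi^\opt_\delta-\pi^\opt)-\delta\,\nabla_\pi\bigl[|\pi|\,\|u'(\langle X,\pi\rangle)\|_{L^q(\P)}\bigr]_{\pi=\pi^\opt}+o(\delta). \]
The baseline FOC kills the first term; Assumption \ref{Ass:2}(ii) makes $\nabla_\pi^2 V(0)$ invertible, so dividing by $\delta$ and letting $\delta\to 0$ yields
\[ \pip(0)=\bigl(\nabla_\pi^2 V(0)\bigr)^{-1}\nabla_\pi\bigl[|\pi|\,\|u'(\langle X,\pi\rangle)\|_{L^q(\P)}\bigr]_{\pi=\pi^\opt}. \]
The proof concludes by carrying out this gradient via the chain and product rules, with differentiation under the integral justified by Assumption \ref{Ass:2}(i); factoring out $\|u'(\langle X,\pi^\opt\rangle)\|_{L^q(\P)}^{1-q}$ and rewriting $(u')^{q-1}=u'/|u'|^{1-q}$ produces the claimed representation in terms of the unit vector $\pi^\opt/|\pi^\opt|$ and the scalar expectation appearing in the statement.
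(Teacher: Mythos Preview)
Your proposal is correct and follows essentially the same route as the paper. The paper simply invokes \cite[Theorem 4]{bartl2020robust} as a black box for $p\in(1,\infty)$ --- that theorem already encapsulates the uniform-in-$\pi$ first-order expansion of the inner problem and the resulting implicit-function-theorem formula you outline --- and then specialises to $f(x,\pi)=u(\langle x,\pi\rangle)$ by direct computation of $\nabla_x f$, $\nabla_x\nabla_\pi f$, and $\nabla_\pi^2 f$; for $p=\infty$ the paper bypasses the abstract $C^1$ expansion and instead expands the first-order condition $\E_{\P^\opt_\delta}[Xu'(\langle X,\pi^\opt_\delta\rangle)]=0$ directly, using the explicit worst-case measure $\P^\opt_\delta=(x-\delta\,\mathrm{sign}(\pi^\opt))_\#\P$.
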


\begin{proof}[Proof of Theorem \ref{thm:a_deriv}]
\textit{Step 1:} Let us first consider the case $p\in (1, \infty)$. We note that uniqueness of $\pi^\opt$ and \cite[Lemma 19]{bartl2020robust} implies that $\pi^\opt(\delta) \to \pi^\opt$ as $\delta \to 0$. Furthermore Assumptions \ref{Ass:1} and \ref{Ass:2} are sufficient to apply \cite[Theorem 4]{bartl2020robust}. We thus conclude
\begin{equation*}
    \pi^\opt(\delta) = \pi^\opt + \pip(0) \delta + o(\delta)
\end{equation*}
where
\begin{align*}
    \pip(0)
    &= \left( \E_{\P}\left[ \left| \nabla_x f(X,\pi^\opt) \right|^q \right] \right)^{\frac{1}{q}-1} \left( \nabla_\pi^2 V(0, \pi^\opt) \right)^{-1}
 \cdot \E_{\P}\left[ \frac{\nabla_{x}\nabla_\pi f(X,\pi^\opt) \nabla_x f(X,\pi^\opt)} {|\nabla_x f(X, \pi^\opt)|^{2-q}}\right]
\end{align*}
for $f(x,\pi)=u(\langle x, \pi\rangle)$.
Writing $u^\prime := u^\prime( \langle x,\pi^\opt \rangle )$ and $u^{\prime\prime} := u^{\prime\prime}( \langle x,\pi^\opt \rangle )$ in order to simplify notation, an explicit computation yields
\begin{align*}
    \nabla_x f(x,\pi^\ast) & = \pi^\opt u^\prime,\\
    \nabla_\pi\nabla_{x} f(x,\pi^\opt) & =  \pi^\opt x^\top u^{\prime\prime}  + Iu^\prime ,\\
   \nabla_\pi \nabla_{x} f(x,\pi^\opt) \nabla_x f(x,\pi^\ast) & = \left(\langle x,\pi^\opt \rangle u^\prime u^{\prime\prime}   + \left( u^\prime \right)^2 \right)\cdot \pi^\opt,\\
    \nabla_{\pi}^2 f(x,\pi^\opt) & = x x^\top u^{\prime \prime},\\
   \nabla_\pi^2 V(0) & = \E_{\P}\left[ XX^\top u^{\prime \prime}\right],
\end{align*}
where $I\in \R^{d \times d}$ denotes the identity matrix and by Assumption \ref{Ass:2} the matrix $$\E_{\P}\left[ XX^\top u^{\prime \prime} \right]$$ is negative definite. 
In particular we have
\begin{align*}
    \pip(0) &= \left( \E_{\P} \left[ \left| \nabla_x f(X,\pi^\opt) \right|^q \right] \right)^{\frac{1}{q}-1} \left( \nabla_\pi^2 V(0, \pi^\opt) \right)^{-1}\cdot\E_{\P} \left[\frac{\nabla_{x}\nabla_\pi f(X,\pi^\opt) \nabla_x f(X,\pi^\opt)} {|\nabla_x f(X, \pi^\opt)|^{2-q}}\, \right]\\
    &= \left( \E_{\P}\left[ \left| u^\prime \right|^q \right] \right)^{\frac{1}{q}-1} |\pi^\opt|^{1-q} \left( \nabla_\pi^2 V(0, \pi^\opt) \right)^{-1}\E_{\P}\left[ \frac{\langle X, \pi^\opt \rangle u^\prime u^{\prime\prime} +(u^\prime)^2}{|\pi^\opt u^\prime |^{2-q}}\, \right]\cdot \pi^\opt\\
   &= \|u^\prime\|_{L^q(\P)}^{1-q}
    \left(\E_{\P}\left[  \frac{ \langle X,\pi^\opt \rangle u^{\prime\prime} +u^\prime }{\left| u^\prime \right|^{1-q} }\right]  \right) \left(\nabla_\pi^2 V(0) \right)^{-1} \cdot \frac{\pi^\opt}{\left| \pi^\opt \right|},
\end{align*}
which concludes the proof for the case $p\in (1, \infty)$.\\

\textit{Step 2:} The case $p=\infty$ follows again by going through \cite[Proof of Theorem 4]{bartl2020robust} line by line, but some arguments can be cut short in this case: indeed, let us first note that there is only one measure in the ball $B_\delta(\P)$ attaining the value $$\inf_{\tilde{\P}\in B_\delta(\P)} \E_{\tilde{\P}} \left[ u( \langle X, \pi\rangle )\right],$$
namely $\P^\opt_\delta=(x-\delta\text{sign}(\pi^\opt))_\#\P$. On the other hand, by optimality of $\pi^\opt_\delta$ we have
\begin{align*}
\E_{\P^\opt_\delta} \left[ X u^\prime (\langle X, \pi^\opt_\delta)\rangle\right] =0,
\end{align*}
and under the uniform integrability condition stated in Assumption \ref{Ass:2}(i)(b) one can directly compute
\begin{align*}
-\lim_{\delta\to 0}\frac{\E_{\P} \left[ X u^\prime (\langle X, \pi^\opt_\delta)\rangle\right]}{\delta}
&=\lim_{\delta \to 0}\frac{\E_{\P^\opt_\delta} \left[ X u^\prime (\langle X, \pi^\opt_\delta)\rangle\right]-\E_{\P} \left[ X u^\prime (\langle X, \pi^\opt_\delta)\rangle\right]}{\delta}\\
&=\lim_{\delta \to 0}\Bigg(\frac{\E_{\P} \left[ (X-\delta\text{sign}(\pi^\opt)) u^\prime (\langle X-\delta\text{sign}(\pi^\opt), \pi^\opt_\delta)\rangle)\right]}{\delta}\\
&\qquad\qquad- \frac{\E_{\P}\left[X u^\prime (\langle X, \pi^\opt_\delta\rangle)\right]}{\delta}\Bigg)\\
&=-\E_{\P} \left[\langle X, \pi^\opt \rangle u^{\prime\prime}(\langle X, \pi^\opt \rangle)+u^\prime (\langle X, \pi^\opt \rangle) \right]\cdot \text{sign}(\pi^\opt).
\end{align*}
Lastly, as the matrix $\nabla_\pi^2 V(0)$ is invertible, we can again follow the same arguments as in \cite[Proof of Theorem 4]{bartl2020robust} to obtain
\begin{align*}
\pip(0)&= \left(\nabla_\pi^2 V(0)\right)^{-1} \lim_{\delta\to 0}\frac{\E_{\P} \left[ X u^\prime (\langle X, \pi^\opt_\delta \rangle)\right]}{\delta}\\
&= \left(\nabla_\pi^2 V(0)\right)^{-1} \E_{\P} \Big[\langle X, \pi^\opt \rangle u^{\prime\prime}(\langle X, \pi^\opt \rangle)+u^\prime (\langle X, \pi^\opt \rangle) \Big]\cdot \text{sign}(\pi^\opt).
\end{align*}
This shows the claim for the case $p=\infty$.
\end{proof}

\section{Distributionally robust marginal utility pricing}\label{sec:robust Davis price}

We move now to the discussion and rigorous statements of results announced in section \ref{sec:mainresults_marginalprice} above.

\subsection{Existence and regularity}
Before we state the main theorem of this section, we set up some useful notation and state some immediate consequences of our setup. For this, we will first keep $\d>0$ fixed. We recall $V(\delta,\epsilon,\d)$ defined in \eqref{eq:Vpd def}.

\begin{definition}
 Let us define 
\begin{equation}\label{eq:v:u of eps g}
 v(\epsilon, x,\pi)=u\left(-\epsilon+\langle x,\pi\rangle+\frac{\epsilon}{\d} g(x)\right)
\end{equation}
 and  $$V(\delta,\epsilon, \pi) \coloneq \inf_{\tilde{\P}\in B_{\delta}(\P)} \E_{\tilde{\P}} \left[ v(\epsilon, X,\pi)\,\right].$$  Given a strategy $\pi\in \Adm$, write $$B^{\opt,\epsilon}_{\delta}(\P,\pi) := \left\{ \P^\opt \in B_{\delta}(\P)\text{ such that } \E_{\P^\opt} \left[ v(\epsilon,X,\pi)\right]=V(\delta,\epsilon,\pi)  \right\}$$ for the set of minimising measures. Lastly, denote the set of optimising vectors by $$\Adm^{\opt,\epsilon}_{\delta}:= \left\{\pi\in \Adm\ : \ V(\delta,\epsilon,\pi)=V(\delta,\epsilon,\d) \right\}.$$ 
We denote a generic element of  $\Adm^{\opt,\epsilon}_{\delta}$ by $\pi^\opt(\delta,\epsilon)$ and recall that $\pi^\opt=\pi^\opt(0,0)$ as well as $\pi^\opt_\delta=\pi^\opt(\delta,0)$.
\end{definition}
In particular we note that the sets $\Adm^{\opt, 0}_{\delta}$ and $B^{\opt,0}_{\delta}(\P,\pi)$ are independent of $\d$ and the payoff $g$. We now make the following assumption:
\begin{assumption}\label{Ass:3}
There exists $k\in (0,\infty)$ such that the function $v:[-k,k] \times \mathcal{S} \times \Adm \to \R$ in \eqref{eq:v:u of eps g} is continuously differentiable. Furthermore,
\begin{enumerate}[(i)]
\item If $p\in (1, \infty)$ then for each $r>0$ there exist $c,\gamma>0$ such that 
\begin{enumerate}
\item $|v(\epsilon,x,\pi)|\le c(1+|x|^{p-\gamma})$,
\item $|\nabla_{\epsilon}v(\epsilon,x,\pi)|\le c(1+|x|^{p-\gamma})$
\end{enumerate}
for all $x\in \mathcal{S}, \pi\in \Adm$ such that $|\pi|\le r$ and $\epsilon\in [-k,k]$.
\item If $p=\infty$, then there exists some $\tilde{\delta}>0$ such that for each $r>0$
\begin{align*}
\E_{\P} \left[\sup_{\pi\in \Adm, |\pi|\le r}\sup_{|z|\le \tilde{\delta}}\sup_{\epsilon\in [-k,k]} \left|\nabla_\epsilon v( \epsilon, X+z, \pi)\right|\right]<\infty.
\end{align*}
\end{enumerate}
\end{assumption}

\begin{remark}
Recall that $u:\R\to \R$ is assumed to be continuously differentiable. Thus
\begin{align*}
\nabla_{\epsilon}v(\epsilon,x,\pi)=u^\prime \left(-\epsilon+\langle x,\pi\rangle+\frac{\epsilon}{\d} g(x)\right)\left(-1 +\frac{g(x)}{\d}\right).
\end{align*}
The assumption that $v$ is continuously differentiable thus implies in particular that $g:\mathcal{S} \to \R$ is continuous.
\end{remark}

\begin{remark}
Note that Assumption \ref{Ass:3} is the natural robust counterpart of Assumption \ref{Ass:5} in appendix \ref{app:marg_util} and is thus stronger than Assumption \ref{Ass:5}. We also remark that when setting $\epsilon=0$ in item (i) of Assumption \ref{Ass:3} we obtain in particular
\begin{align*}
 |u(\langle x, \pi\rangle )|&\le c(1+|x|^{p-\gamma}),\\
 \left|u^\prime (\langle x, \pi\rangle) \left(-1+\frac{g(x)}{p_d}\right)\right|&\le c(1+|x|^{p-\gamma}),
\end{align*}
which is usually stronger than Assumption \ref{Ass:1}.
\end{remark}

Before we consider a characterisation of the distributionally robust Davis price in the spirit of \eqref{eq:davis_main} we first establish existence and uniqueness of optimisers $\pi^\opt(\delta,0)\in \Adm^{\opt,0}_{\delta}$ and $\P^{\opt}\in B^{\opt,0}_{\delta}(\P,\pi(\delta,0)).$

\begin{lemma}\label{lem:convergence}
Let Assumption \ref{Ass:3} be satisfied. Then the following hold: 
\begin{enumerate}[(i)]
\item The set $\Adm^{\opt,\epsilon}_{\delta}\neq \emptyset$ for all $\delta\ge 0$ and there exists a compact set $K\subseteq \R^d$ such that the set of optimisers is contained in $K$, i.e. $$\bigcup_{\epsilon\in[-k,k]} \bigcup_{\delta\ge 0}\Adm^{\opt,\epsilon}_{\delta}\subseteq K.$$
\item For every sequence $(\epsilon_n)_{n\in \N}$ such that $\lim_{n\to \infty} \epsilon_n=0$, fixed $\delta\ge 0$ and $(\pi_n)_{n\in \N}$ such that $\pi_n\in \Adm^{\opt,\epsilon_n}_{\delta}$ for all $n\in \N$ there exists a subsequence which converges to some $\pi^\ast\in \Adm^{\opt,0}_{\delta}$.
\item For every sequence $(\delta_n)_{n\in \N}$ such that $\lim_{n\to \infty} \delta_n=0$ and $(\pi_n)_{n\in \N}$ such that $\pi_n\in \Adm^{\opt,0}_{\delta_n}$ for all $n\in \N$ there exists a subsequence which converges to $\pi^\ast\in \Adm^{\opt,0}_{0}$.
\end{enumerate}
\end{lemma}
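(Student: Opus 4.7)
The strategy is to exhibit a single compact $K \subset \R^d$ containing every optimiser $\bigcup_{\epsilon \in [-k,k]} \bigcup_{\delta \ge 0} \Adm^{\opt,\epsilon}_\delta$ (this yields (i) via upper semicontinuity), and then to pass to the limit in the optimality inequalities by means of suitable continuity of $V(\delta, \epsilon, \pi)$ in each variable. A convenient pivot is that $\P \in B_\delta(\P)$ for every $\delta \ge 0$, giving the a~priori upper envelope $V(\delta, \epsilon, \pi) \le \E_\P[v(\epsilon, X, \pi)]$.

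For item (i), write $\pi = r\hat{\pi}$ with $|\hat{\pi}|=1$ and let $r \to \infty$: the no-arbitrage condition $\P(\langle X, \hat{\pi}\rangle > 0) > 0$ combined with strict concavity, strict monotonicity, boundedness from above of $u$, and the growth bound on $v$ in Assumption \ref{Ass:3}, yields $\E_\P[v(\epsilon, X, \pi)] \to -\infty$ uniformly in $\epsilon \in [-k,k]$. Hence there is $r^\star > 0$ (independent of $\delta, \epsilon$) so that every optimiser lies in $K := \Adm \cap \{|\pi| \le r^\star\}$. Existence then follows once $\pi \mapsto V(\delta, \epsilon, \pi)$ is seen to be upper semicontinuous on $K$: the Wasserstein triangle inequality $\|X\|_{L^p(\tilde{\P})} \le \|X\|_{L^p(\P)} + \delta$ (and its $W_\infty$-analogue) together with the growth bound in Assumption \ref{Ass:3} allow dominated convergence to deliver equicontinuity of the family $\{\pi \mapsto \E_{\tilde{\P}}[v(\epsilon, X, \pi)]\}_{\tilde{\P} \in B_\delta(\P)}$ on $K$; a pointwise infimum of continuous functions is u.s.c.

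For item (ii), fix $\delta$, let $\epsilon_n \to 0$, $\pi_n \in \Adm^{\opt, \epsilon_n}_\delta$, and extract $\pi_{n_k} \to \pi^\star \in K$ by (i). Passing to the limit in $V(\delta, \epsilon_{n_k}, \pi_{n_k}) \ge V(\delta, \epsilon_{n_k}, \pi')$ requires two ingredients: (a) for each fixed $\pi$, continuity of $\epsilon \mapsto V(\delta, \epsilon, \pi)$ at $0$, uniform in $\pi \in K$, which comes from the mean-value estimate
$$|v(\epsilon_1, x, \pi) - v(\epsilon_2, x, \pi)| \le |\epsilon_1 - \epsilon_2| \sup_{s \in [-k,k]} |\nabla_\epsilon v(s, x, \pi)|$$
combined with the integrability hypothesis on $\nabla_\epsilon v$ in Assumption \ref{Ass:3}(i)(b) or \ref{Ass:3}(ii) and the uniform moment estimate on $B_\delta(\P)$; and (b) the u.s.c.\ of $V(\delta, 0, \cdot)$ at $\pi^\star$ obtained in (i). Taking $\limsup$ on the left using (a) and (b), and $\lim$ on the right using (a), yields $V(\delta, 0, \pi^\star) \ge V(\delta, 0, \pi')$ for every $\pi' \in \Adm$, so $\pi^\star \in \Adm^{\opt, 0}_\delta$.

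For item (iii), the argument is parallel with $\delta_n \to 0$ in place of $\epsilon_n \to 0$; the only new ingredient required is continuity of $\delta \mapsto V(\delta, 0, \pi)$ at $\delta = 0$, uniform in $\pi \in K$. The upper bound $V(\delta, 0, \pi) \le V(0, 0, \pi)$ is immediate from $\P \in B_\delta(\P)$, and the matching $O(\delta)$ lower bound is furnished by the first-order Wasserstein sensitivity estimate of \cite[Lemma 19 and its proof]{bartl2020robust} applied to the $C^1$ integrand $u(\langle \cdot, \pi\rangle)$, with constants controlled uniformly on $K$ by the same moment and growth bounds. Uniqueness $\Adm^{\opt, 0}_0 = \{\pi^\opt\}$ (Assumption \ref{Ass:4a}) then promotes subsequential convergence to convergence of the whole sequence. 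The principal technical obstacle is securing the uniform-in-$\tilde{\P} \in B_\delta(\P)$ integrability and uniform-in-$\pi \in K$ continuity of $V$ needed to freely interchange limits and infima; once these are in place, the three claims follow by routine compactness and u.s.c.\ arguments.
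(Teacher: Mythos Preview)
Your proposal is correct and follows essentially the same route as the paper: both obtain the uniform compact set $K$ from the no-arbitrage condition via the upper envelope $V(\delta,\epsilon,\pi)\le \E_\P[v(\epsilon,X,\pi)]$ (the paper invoking \cite[Lemma 19]{bartl2020robust}, you spelling out the u.s.c.\ mechanism), and both defer (iii) to the same external lemma. The only stylistic difference is in (ii), where the paper argues by contradiction using reverse Fatou while you pass to the limit directly via the uniform-in-$\pi$ Lipschitz estimate in $\epsilon$ combined with u.s.c.\ in $\pi$; these are equivalent, and your version has the minor advantage of making the required regularity explicit.
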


\begin{proof}
The first claim follows exactly as in the proof of \cite[Lemma 19]{bartl2020robust} exchanging the ``$\inf$" and the ``$\sup$" and
noting that
\begin{align*}
\inf_{\tilde{\P} \in B_{\delta}(\P)} \E_{\tilde{\P}}\left[ v(\epsilon, X, \pi)\right] \le \E_{\P}\left[ v(\epsilon, X,\pi)\right]
\end{align*}
for all $\delta\ge 0, \pi \in \Adm$ and $\epsilon\in [-k,k]$, see also the proof of Lemma \ref{lem:strict_concave}, noting that $V(\delta,\epsilon)> -\infty$ by Assumption \ref{Ass:3}.(i). For \textit{(ii)} we argue by contradiction: let us  assume that (possibly after taking a subsequence) $\pi_n$ converges to a limit $\tilde{\pi}\notin \Adm^{\opt, 0}_{\delta}$. Since $\tilde{\pi}$ is not an optimiser, and using the reverse Fatou lemma, we have
\begin{align*}
V(\delta, 0, \d)> & \inf_{\tilde{\P} \in B_{\delta}(\P)} \E_{\tilde{\P}} \left[ v(0, X,\tilde{\pi})\right] \ge \limsup_{n\to \infty}  \inf_{\tilde{\P} \in B_{\delta}(\P)} \E_{\tilde{\P}} \left[ v(\epsilon_n,  X, \pi_n)\right] \\
&=  \limsup_{n\to \infty} V(\delta, \epsilon_n, \d).
\end{align*}
On the other hand, plugging $\pi^\opt\in \Adm^{\opt, 0}_\delta$ into $V(\delta,\epsilon_n,\d)$ implies
\begin{align*}
\liminf_{n \to \infty} V(\delta,\epsilon_n, \d) &\ge \liminf_{n \to \infty} \inf_{\tilde{\P} \in B_{\delta}(\P)} \E_{\tilde{\P}} \left[ v(\epsilon_n, X, \pi^\opt)\right] \\
&\ge \liminf_{n \to \infty} \Bigg( \inf_{\tilde{\P} \in B_{\delta}(\P)} \E_{\tilde{\P}} \left[ v(0, X, \pi^\opt)\right] \\
&\qquad-\epsilon_n \sup_{\tilde{\P} \in B_{\delta}(\P)} \E_{\tilde{\P}} \left[  \int_0^1 \nabla_{\epsilon} v(t\epsilon_n, X, \pi^\opt)\,dt\right]  \Bigg)\\
&=V(\delta,0, \d)
\end{align*}
as $|\nabla_{\epsilon} v(\epsilon, x, \pi^\opt)|\le c(1+|x|^{p})$ for some $c>0$, all $\epsilon$ small enough and all $x\in \mathcal{S}$. This gives the desired contradiction concludes the proof of the second part. The proof of \textit{(iii)} follows again as in the proof of \cite[Lemma 19]{bartl2020robust} exchanging the ``$\inf$" and the ``$\sup$". This concludes the proof.
\end{proof}

While existence of optimisers is thus guaranteed, their uniqueness is more delicate. Let us start with the specific case $\pi^\opt=0$. Interestingly, it turns out that under certain assumptions, this case fully characterises the martingale property of $\P$ (recall that  $X=S_1-S_0)$:
\begin{lemma}\label{lem:degenerate}
Assume that $0\in \interior{\Adm}$. Then $\sup_{\pi\in \Adm} \E_{\P}\left[ u(\langle X,\pi \rangle)\right]$ is attained for $\pi^\opt=0$ if and only if $\E_{\P}\left[ X\right]=0$. In that case, for all $\delta\ge 0$,
\begin{align*}
\sup_{\pi\in \Adm} \inf_{\tilde{\P}\in B_{\delta}(\P)} \E_{\tilde{\P}} \left[ u(\langle X,\pi \rangle)\right]
\end{align*}
is attained for $\pi^\opt(\delta,0)=0$. Furthermore, the optimiser is unique, $\Adm^{\opt, 0}_\delta=\{0\}$, and $B^{\opt, 0}_{\delta}(\P, \pi^\opt(\delta,0))=B_\delta(\P)$.
\end{lemma}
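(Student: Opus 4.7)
The plan is to first reduce the equivalence $\pi^\opt = 0 \Leftrightarrow \E_\P[X] = 0$ to the classical first-order / Jensen argument, and then to exploit the fact that at $\pi = 0$ the integrand $u(\langle X,\pi\rangle) = u(0)$ becomes a deterministic constant, which renders the robust problem essentially trivial.

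For the equivalence: if $\pi^\opt = 0$ attains the supremum of $\pi \mapsto \E_\P[u(\langle X,\pi\rangle)]$, then since $0 \in \interior{\Adm}$, differentiating under the expectation at $\pi = 0$ gives the first-order condition
$$
u'(0)\,\E_\P[X] \;=\; 0,
$$
and $u'(0) > 0$ by strict monotonicity of $u$, forcing $\E_\P[X] = 0$. Conversely, if $\E_\P[X] = 0$, then Jensen's inequality together with strict concavity of $u$ yields, for any $\pi \in \Adm$,
$$
\E_\P[u(\langle X,\pi\rangle)] \;\le\; u\bigl(\langle \E_\P[X], \pi\rangle\bigr) \;=\; u(0),
$$
with equality only when $\langle X,\pi\rangle$ is $\P$-a.s.\ constant. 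For $\pi \neq 0$ the non-degeneracy condition of Assumption \ref{Ass:stand} gives $\P(\langle X,\pi\rangle > 0) > 0$, which together with $\E_\P[\langle X,\pi\rangle] = 0$ rules out $\P$-a.s.\ constancy. Thus $\pi = 0$ is the unique maximiser of the baseline problem and the value equals $u(0)$.

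For the robust part, fix $\delta \ge 0$ and assume $\E_\P[X] = 0$. At $\pi = 0$ the integrand is the deterministic constant $u(0)$, so $\E_{\tilde\P}[u(\langle X,0\rangle)] = u(0)$ for every $\tilde\P \in B_\delta(\P)$; in particular
$$
\inf_{\tilde\P\in B_\delta(\P)} \E_{\tilde\P}[u(\langle X,0\rangle)] \;=\; u(0),
$$
and every element of $B_\delta(\P)$ attains this infimum, giving $B^{\opt,0}_\delta(\P,0) = B_\delta(\P)$. For any $\pi \in \Adm$ with $\pi \neq 0$, plugging $\tilde\P = \P \in B_\delta(\P)$ into the inner infimum yields
$$
\inf_{\tilde\P\in B_\delta(\P)} \E_{\tilde\P}[u(\langle X,\pi\rangle)] \;\le\; \E_\P[u(\langle X,\pi\rangle)] \;<\; u(0),
$$
the strict inequality coming from the Jensen bound above. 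Hence the outer supremum is attained uniquely at $\pi^\opt(\delta,0) = 0$ with value $u(0)$, so $\Adm^{\opt,0}_\delta = \{0\}$.

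There is no substantial obstacle: the only technical subtlety is the differentiation under the expectation used in the first-order condition, which is justified by the growth hypotheses on $u'$ from Assumption \ref{Ass:1} combined with $\P \in \cP_p(\cS)$ (Assumption \ref{Ass:4}), ensuring integrability of $X\,u'(\langle X,\pi\rangle)$ uniformly for $\pi$ in a neighbourhood of $0$.
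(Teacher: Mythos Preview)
Your proof is correct and follows essentially the same route as the paper: first-order condition plus Jensen for the equivalence, constancy of the integrand at $\pi=0$ for the robust value, and strict inequality for $\pi\neq 0$ to obtain uniqueness. Your uniqueness step is in fact slightly more streamlined than the paper's, since you simply plug in $\tilde\P=\P$ and invoke the strict Jensen inequality already established for the baseline problem, whereas the paper constructs an auxiliary $\hat\P\in B_\delta(\P)$ with $\E_{\hat\P}[\langle X,\pi\rangle]<0$ and applies Jensen under $\hat\P$.
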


\begin{proof}
Note that the first-order condition for $\delta=0$ at $\pi^\opt=0$ implies
\begin{align*}
0=\E_{\P}\left[ X u^\prime(\langle X, \pi^\opt \rangle)\right] =u^\prime(0)\,\E_{\P} \left[ X\right],
\end{align*}
and thus $\E_{\P} \left[ X\right]=0$, as $u$ is strictly increasing.
On the other hand, if $\E_{\P} \left[ X\right]=0$, Jensen's inequality implies
\begin{align*}
\sup_{\pi\in \Adm} \E_{\P} \left[ u(\langle X,\pi \rangle)\right] \le \sup_{\pi\in \Adm} u\left( \E_{\P} \left[ \langle X, \pi \rangle \right] \right)=u(0),
\end{align*}
and the supremum is attained for $\pi^\opt=0$. 
Furthermore, if $\pi^\opt=0$ then
\begin{align}\label{eq:easier}
\begin{split}
u(0)&= \inf_{\tilde{\P}\in B_{\delta}(\P)} \E_{\tilde{\P}} \left[  u\left(\langle X, 0\rangle \right)\right] \le \sup_{\pi\in \Adm} \inf_{\tilde{\P}\in B_{\delta}(\P)} \E_{\tilde{\P}} \left[ u(\langle X,\pi \rangle)\right] \\
&\le \sup_{\pi\in \Adm} \E_{\P}\left[ u(\langle X,\pi \rangle)\right] =u(0),
\end{split}
\end{align}
so in fact equality holds in \eqref{eq:easier} and the supremum is again attained for $\pi^\opt(\delta,0)=0$.
Lastly, we note that for any $\pi\in \Adm, \pi\neq 0$ we have $$\P(\{X\in \mathcal{S}\colon \langle X,\pi\rangle>0 \})>0$$ as stated in Assumption \ref{Ass:4a}. Thus it is always possible to find $\hat{\P} \in B_{\delta}(\P)$ such that $\E_{\hat{\P}} \left[ \langle X, \pi\rangle \right]<0$, and so
\begin{align*}
 \inf_{\tilde{\P}\in B_{\delta}(\P)} \E_{\tilde{\P}} \left[ u(\langle X,\pi \rangle)\right] &\le  \E_{\hat{\P}} \left[ u(\langle X,\pi \rangle)\right] \\
 &\le u\left( \E_{\hat{\P}}\left[ \langle X,\pi \rangle\right] \right) <u(0),
\end{align*}
where we used Jensen's inequality again.
In conclusion $\pi^\opt_\delta=0$ is the unique optimiser. 
This concludes the proof.
\end{proof}

More generally, the lemma below shows that $\pi^\opt_\delta \in \mathcal{A}^{\opt, 0}_\delta$ is always unique. In the case $0\notin \Adm^\opt$ we also obtain uniqueness of $\P^\opt\in B_\delta^{\opt,0}(\P, \pi^\opt_\delta)$.

\begin{lemma}
Let Assumption \ref{Ass:3} hold. For $\delta>0$ small enough the optimal strategy is unique, $\Adm^{\opt, 0}_\delta=\{\pi^\opt_\delta\}$. Furthermore $\pi^\opt_\delta \to \pi^\opt$ for $\delta \to 0$, where $\pi^\opt$ is again unique. If in addition $0\notin \mathcal{A}^\opt$, then $B_\delta^{\opt, 0}(\P, \pi^\opt)=\{\P^\opt\}$ is a singleton. 
\end{lemma}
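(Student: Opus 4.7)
For any $\pi_1\neq \pi_2$ in $\Adm$, applying the no-arbitrage condition of Assumption \ref{Ass:stand} to both $\pm(\pi_1-\pi_2)$ shows $\P(\langle X,\pi_1-\pi_2\rangle \neq 0)>0$. Combined with the strict concavity of $u$, this yields strict concavity of $\pi\mapsto \E_\P[u(\langle X,\pi\rangle)]$ on the convex set $\Adm$, hence uniqueness of its maximiser $\pi^\opt$. The convergence $\pi^\opt_\delta\to\pi^\opt$ as $\delta\to 0$ is then immediate from Lemma \ref{lem:convergence}(iii): every subsequential limit of $(\pi^\opt_\delta)$ lies in $\Adm^{\opt,0}_0=\{\pi^\opt\}$, so the whole net converges.

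\textbf{Step 2: Uniqueness of $\pi^\opt_\delta$ for small $\delta$.} The plan is to upgrade the above to strict concavity of $\pi\mapsto V(\delta,0,\pi)$ for $\delta$ small. Fix $\pi_1\neq \pi_2$ in $\Adm$ and set $\pi_\lambda=\tfrac12\pi_1+\tfrac12\pi_2$. Lower semicontinuity (Assumption \ref{Ass:3}) and weak compactness of the Wasserstein ball yield a worst-case measure $\P^\opt_\lambda\in B^{\opt,0}_\delta(\P,\pi_\lambda)$. Pointwise strict concavity of $u$ on the set $A:=\{x\colon \langle x,\pi_1-\pi_2\rangle\neq 0\}$ upgrades to a strict Jensen bound after integration against $\P^\opt_\lambda$, provided $\P^\opt_\lambda(A)>0$. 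To secure this, observe that any $\tilde\P$ supported on the hyperplane $H_v:=\{x\colon\langle x,v\rangle=0\}$ with $v\neq 0$ satisfies
$$W_p(\P,\tilde\P)\geq \bigl(\E_\P[|\langle X,v/|v|\rangle|^p]\bigr)^{1/p},$$
and the right-hand side is, by no-arbitrage together with continuity on the compact unit sphere, uniformly bounded below by some $c_0>0$. Hence for all $\delta<c_0$ no worst-case measure can concentrate on $H_{\pi_1-\pi_2}$, giving $\P^\opt_\lambda(A)>0$ and strict concavity of $V(\delta,0,\cdot)$. This forces $\Adm^{\opt,0}_\delta=\{\pi^\opt_\delta\}$. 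For $p=\infty$ the explicit form $\P^\opt_\delta=(x-\delta\,\mathrm{sign}(\pi^\opt))_\#\P$ identified in the proof of Theorem \ref{thm:EUMsens} gives the same conclusion directly.

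\textbf{Step 3: Uniqueness of $\P^\opt$ under $0\notin\mathcal{A}^\opt$.} For $p=\infty$ the identification $\P^\opt=(x-\delta\,\mathrm{sign}(\pi^\opt_\delta))_\#\P$ in the proof of Theorem \ref{thm:a_deriv} is already an explicit uniqueness statement. For $p\in(1,\infty)$, I would invoke Kantorovich duality for Wasserstein balls,
$$V(\delta,0,\pi^\opt_\delta)=\sup_{\lambda\geq 0}\Big\{-\lambda\delta^p+\E_\P\Big[\inf_{y\in\cS}\bigl\{u(\langle y,\pi^\opt_\delta\rangle)+\lambda|X-y|^p\bigr\}\Big]\Big\},$$
which recovers the primal optimiser as $\P^\opt=T_\#\P$ with $T(x)\in\arg\min_y\{u(\langle y,\pi^\opt_\delta\rangle)+\lambda^\opt|x-y|^p\}$ for the optimal dual multiplier $\lambda^\opt$. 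Since $u'>0$ and $\pi^\opt_\delta\neq 0$ for $\delta$ small (by Step 1 and $\pi^\opt\neq 0$), the first-order condition forces $T(x)$ to lie on the ray $y=x-t\pi^\opt_\delta$, $t\geq 0$, reducing the inner problem to a scalar minimisation whose Hessian in $t$ becomes strictly positive once $\lambda^\opt$ is large. The main obstacle is justifying strong duality and verifying that $\lambda^\opt\to\infty$ as $\delta\to 0$; once these asymptotics are in place, a unique $t^\opt(x)$ for $\P$-a.e.\ $x$ defines a single-valued transport $T$, so the optimal coupling and hence $\P^\opt$ is uniquely determined.
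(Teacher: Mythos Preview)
Your Steps 1 and 2 are correct and coincide with the paper's approach: the paper likewise fixes a worst-case measure $\tilde\P\in B^{\opt,0}_\delta(\P,\lambda\pi+(1-\lambda)\tilde\pi)$ at the midpoint and uses strict concavity of $u$ together with Assumption \ref{Ass:4a} to get strict concavity of $\pi\mapsto V(\delta,0,\pi)$ ``for all $\delta\ge 0$ small enough''. Your hyperplane bound $W_p(\P,\tilde\P)\ge (\E_\P[|\langle X,v/|v|\rangle|^p])^{1/p}\ge c_0>0$ is a clean way to quantify the ``small enough'' that the paper leaves implicit; it makes explicit why the worst-case measure cannot collapse onto $\{\langle x,\pi_1-\pi_2\rangle=0\}$. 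Step 3 for $p=\infty$ also matches the paper.

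The genuine divergence is Step 3 for $p\in(1,\infty)$. Your proposed route via Kantorovich duality---recovering $\P^\opt=T_\#\P$ through the inner minimisation $\min_y\{u(\langle y,\pi^\opt_\delta\rangle)+\lambda^\opt|x-y|^p\}$ and then arguing uniqueness of $T$ from strict convexity of the scalar problem in $t$---is in principle viable but, as you acknowledge, leaves real work undone (strong duality, behaviour of $\lambda^\opt$, and the uniqueness of the argmin for \emph{every} $x$ rather than just asymptotically). The paper sidesteps all of this with a two-line convex-analysis argument: if $\P^\opt\neq\P'$ were two minimisers, then by strict convexity of $L^p$ one has $\tfrac12(\P^\opt+\P')\in\interior{B_\delta(\P)}$, and this average is again a minimiser by linearity of $\tilde\P\mapsto\E_{\tilde\P}[u(\langle X,\pi^\opt_\delta\rangle)]$. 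But a linear functional attaining its infimum over a convex set at an interior point, combined with $\pi^\opt_\delta\neq 0$ (which is where the hypothesis $0\notin\cA^\opt$ enters, via $\pi^\opt_\delta\to\pi^\opt\neq 0$), allows one to push mass a bit further in the direction $-\pi^\opt_\delta/|\pi^\opt_\delta|$ while staying in $B_\delta(\P)$ and strictly decreasing the objective---a contradiction. This avoids duality entirely and works for all small $\delta$ simultaneously. I would recommend replacing your duality sketch with this argument.
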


\begin{proof}
Take any $\pi, \tilde{\pi}\in \Adm$ such that $\pi\neq \tilde{\pi}$. Assumption \ref{Ass:3} and \cite[Lemma 20]{bartl2020robust} implies that the set $B_{\delta}^{\opt,0}(\P,\lambda \pi+(1-\lambda)\tilde\pi)\neq \emptyset$ for all $\lambda\in [0,1]$. Thus for all $\lambda\in (0,1)$ we have for any $\tilde{\P} \in B_{\delta}^{\opt,0}(\P,\lambda\pi+(1-\lambda)\tilde{\pi})$ that 
\begin{align*}
V(\delta, 0, \lambda \pi+(1-\lambda)\tilde{\pi})&=\E_{\tilde{\P}} \left[  v(0,x,\lambda \pi+(1-\lambda)\tilde{\pi})\right] \\
&>\lambda\E_{\tilde{\P}}\left[ v(0,X,\pi)\right]+(1-\lambda)\E_{\tilde{\P}} \left[ v(0,X,\tilde{\pi})\right]\\
&\ge \lambda V(\delta, 0,\pi)+(1-\lambda) V(\delta, 0,\tilde{\pi})
\end{align*}
for all $\delta\ge 0$ small enough using Assumption \ref{Ass:4a} and the fact $u$ is strictly concave. In conclusion $\pi \mapsto V(\delta, 0, \pi)$ is strictly concave on $\Adm$ and thus the optimiser $\pi^\opt\in \mathcal{A}^{\opt, 0}_\delta$ is unique. Next, Lemma \ref{lem:convergence} implies that $\pi^\opt_\delta \to \pi^\opt$ for $\delta \to 0$, where $\pi^\opt$ is again unique by strict concavity of $u$.\\

On the other hand, for $\pi^\opt_\delta \neq 0$ and $1 < p <\infty$, uniqueness of $\P^\opt \in B^{\opt,0}_\delta(\P,\pi^\opt_\delta)$ follows from strict convexity of $L^p$-spaces: indeed if there exists $\P'\in B^{\opt,0}_\delta(\P,\pi^\opt_\delta)$, then $$\frac{1}{2} \left( \P^\opt+\P'\right)\in \interior{(B_\delta(\P))} $$ and $$ \frac{1}{2} \left( \E_{\P^\opt}[v(0,X,\pi^\opt_\delta)] + \E_{\P'} [v(0,X,\pi^\opt_\delta)] \right)= V(\delta,0, \pi^\opt_\delta)=V(\delta).$$
However, as $0\notin \mathcal{A}^\opt$, we conclude that $V(\delta)<V(0)$. This leads to a contradiction.\\
Lastly, for $p=\infty$ and $\pi^{\opt}_\delta \neq 0$, it can be directly seen that $\P^\opt=(x-\text{sign}(\pi^\opt_\delta))_\#\P$ is unique as well.
\end{proof}

We now compute the derivative $\nabla_\epsilon V(\delta, 0 ,\d)$ for fixed $\delta\ge 0$ and $\d> 0$.
\begin{theorem}\label{thm. marginal}
Fix $\delta\ge 0, \d > 0$ and let Assumption \ref{Ass:3} hold. Then
\begin{align*}
\begin{split}
\nabla_{\epsilon} V(\delta, 0, \d)&= \sup_{\pi \in \Adm^{\opt, 0}_{\delta} } \inf_{\P^\opt \in B^{\opt, 0}_{\delta}(\P, \pi) } \E_{\P^\opt} \left[ \nabla_{\epsilon} v(0, X, \pi)\right] \\
&=\sup_{\pi \in \Adm^{\opt, 0}_{\delta} } \inf_{\P^\opt \in B^{\opt, 0}_{\delta}(\P, \pi) }  \E_{\P^\opt} \left[ u^{\prime}( \langle X,\pi\rangle) \left(-1+\frac{g(X)}{\d}\right)\right].
\end{split}
\end{align*}
In particular the robust marginal utility price $\hatd(\delta)$ is a solution to 
\begin{align}\label{eq:marginal}
\inf_{\P^\opt \in B^{\opt, 0}_{\delta}(\P, \pi^\opt_\delta) }  \E_{\P^\opt} \left[ u^{\prime}( \langle X,\pi^\opt_\delta\rangle) \left(-1+\frac{g(X)}{\d}\right)\,\right]=0
\end{align}
and is thus given by 
\begin{align*}
\hatd(\delta)=\frac{ \E_{\P^\opt}\left[ u^\prime (\langle X,\pi^\opt_\delta\rangle)\,g(X)\, \right]}{\E_{\P^\opt} \left[ u^\prime(\langle X, \pi^\opt_\delta\rangle)\right]}
\end{align*}
for any $\P^\opt \in B_\delta^{\opt,0}(\P, \pi^\opt(\delta,0))$.
\end{theorem}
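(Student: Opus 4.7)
The plan is to prove the sup-inf identity via an envelope-theorem argument applied twice, then extract the Davis-price formula from the first-order condition. Set $F(\epsilon,\pi) := \inf_{\tilde\P \in B_\delta(\P)} \E_{\tilde\P}[v(\epsilon,X,\pi)]$, so that $V(\delta,\epsilon,\d) = \sup_{\pi \in \Adm} F(\epsilon,\pi)$. First I would differentiate $F(\cdot,\pi)$ at $\epsilon=0$ for each fixed $\pi \in \Adm^{\opt,0}_\delta$, and then apply a second envelope argument to differentiate the outer supremum.

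For the inner envelope, the upper estimate uses $F(\epsilon,\pi) \le \E_{\P^\opt}[v(\epsilon,X,\pi)]$ together with $F(0,\pi) = \E_{\P^\opt}[v(0,X,\pi)]$ for any $\P^\opt \in B^{\opt,0}_\delta(\P,\pi)$: dividing by $\epsilon$ and applying dominated convergence via the integrability bound of Assumption \ref{Ass:3} yields $\partial_\epsilon F(0,\pi) \le \E_{\P^\opt}[\nabla_\epsilon v(0,X,\pi)]$, after which one takes the infimum over $\P^\opt$. For the matching lower estimate, pick $\P^\opt_\epsilon \in B^{\opt,\epsilon}_\delta(\P,\pi)$; relative $W_p$-compactness of $B_\delta(\P)$ and lower semicontinuity of $\tilde\P \mapsto \E_{\tilde\P}[v(0,X,\pi)]$ give a subsequential limit $\P^\opt_0 \in B^{\opt,0}_\delta(\P,\pi)$, and the reverse inequality $F(\epsilon,\pi) - F(0,\pi) \ge \E_{\P^\opt_\epsilon}[v(\epsilon,X,\pi) - v(0,X,\pi)]$ combined with reverse Fatou delivers $\partial_\epsilon F(0,\pi) \ge \inf_{\P^\opt} \E_{\P^\opt}[\nabla_\epsilon v(0,X,\pi)]$.

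For the outer envelope, the lower bound is immediate: any $\pi \in \Adm^{\opt,0}_\delta$ satisfies $V(\delta,\epsilon,\d) \ge F(\epsilon,\pi)$ and $V(\delta,0,\d) = F(0,\pi)$, so $\liminf_{\epsilon \downarrow 0} \epsilon^{-1}(V(\delta,\epsilon,\d)-V(\delta,0,\d)) \ge \partial_\epsilon F(0,\pi)$, and one supremises over $\pi$. For the upper bound, pick $\pi_\epsilon \in \Adm^{\opt,\epsilon}_\delta$ and use Lemma \ref{lem:convergence}(ii) to extract $\pi_\epsilon \to \pi^\ast \in \Adm^{\opt,0}_\delta$ along a subsequence; the estimate $V(\delta,\epsilon,\d)-V(\delta,0,\d) \le F(\epsilon,\pi_\epsilon) - F(0,\pi_\epsilon)$ reduces matters to a repetition of the inner upper-bound argument along $(\epsilon,\pi_\epsilon)$, yielding $\limsup \le \partial_\epsilon F(0,\pi^\ast) \le \sup_\pi \inf_{\P^\opt} \E_{\P^\opt}[\nabla_\epsilon v(0,X,\pi)]$. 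Since the lemma preceding the theorem gives $\Adm^{\opt,0}_\delta = \{\pi^\opt_\delta\}$, the outer supremum collapses; the first-order condition $\nabla_\epsilon V(\delta,0,\hatd(\delta)) = 0$ combined with $\nabla_\epsilon v(0,x,\pi) = u'(\langle x,\pi\rangle)(-1 + g(x)/\d)$ and rearrangement at an attaining $\P^\opt$ then delivers the closed-form Davis price.

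The main obstacle is the upper-bound step in the outer envelope, where both $\pi_\epsilon$ and the nature's best response to $\pi_\epsilon$ move with $\epsilon$ and must be controlled simultaneously. The uniform-in-$\pi$ integrability bounds in Assumption \ref{Ass:3} are precisely what allow dominated convergence to be applied uniformly along such a sequence, and tightness of $B_\delta(\P)$ permits extraction of a further subsequence along which the minimizing measures converge to some element of $B^{\opt,0}_\delta(\P,\pi^\ast)$, legitimising the exchange of the two nested limits.
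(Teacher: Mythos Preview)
Your proposal is correct and follows essentially the same route as the paper: the paper's upper bound \eqref{eq:firstact} is exactly your outer-upper plus inner-upper combined (choose $\pi_\epsilon\in\Adm^{\opt,\epsilon}_\delta$, pass to a subsequential limit $\pi\in\Adm^{\opt,0}_\delta$ via Lemma~\ref{lem:convergence}, then fix $\P^\opt\in B^{\opt,0}_\delta(\P,\pi)$ and use the growth bounds of Assumption~\ref{Ass:3} together with the Milgrom--Segal envelope theorem), and the lower bound \eqref{eq:seccondact} is your outer-lower plus inner-lower (fix $\pi\in\Adm^{\opt,0}_\delta$, take $\P^{\opt,\epsilon}\in B^{\opt,\epsilon}_\delta(\P,\pi)$, and use $W_{p-\gamma}$-compactness of $B_\delta(\P)$ to extract a limit in $B^{\opt,0}_\delta(\P,\pi)$). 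The only cosmetic difference is that the paper does not explicitly isolate the intermediate function $F(\epsilon,\pi)$ but merges the two envelope layers into a single estimate on each side.
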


\begin{proof}[Proof of Theorem \ref{thm. marginal}]
\textit{Step 1:} We first consider the case $p\in (1, \infty).$ We start by proving the inequality
\begin{align}\label{eq:firstact}
\lim_{\varepsilon \to 0} \frac{V(\delta,\epsilon, \d)-V(\delta, 0, \d)}{\epsilon}\le \sup_{\pi \in \Adm^{\opt, 0}_{\delta} } \inf_{\P^\opt \in B^{\opt, 0}_{\delta}(\P, \pi) } \E_{\P^\opt}\left[ \nabla_{\epsilon} v(0, X, \pi)\right].
\end{align}
Indeed, for small $\epsilon>0$ let us take optimisers $\pi^{\opt}(\delta, \epsilon)\in \Adm^{\opt,\epsilon}_{\delta}$. Then, we can apply Lemma \ref{lem:convergence}.\textit{(ii)}, so that after taking a subsequence (without relabelling) there exists some $\pi\in \Adm^{\opt, 0}_{\delta}$ such that $\lim_{\epsilon\to 0}\pi^{\opt}(\delta, \epsilon)=\pi$. Take $\P^\opt \in B^{\opt,0}_{\delta}(\P,\pi)$. As $\P^\opt \in B_{\delta}(\P)$ we then have 
\begin{align*}
\lim_{\varepsilon \downarrow 0} \frac{V(\delta,\epsilon, \d)-V(\delta,0, \d)}{\epsilon}&\le \lim_{\epsilon \to 0}\frac{1}{\epsilon} \E_{\P^\opt} \left[ v(\epsilon,X,\pi^{\opt}(\delta, \epsilon))-v(0,X, \pi)\right].
\end{align*}
By assumption we have $|\nabla_{\epsilon}v(\tilde{\epsilon}, x, \pi)| \le c (1+|x|^{p-\gamma})$. Then arguing as in \eqref{eq:abscont}, the envelope theorem for arbitrary choice sets of \cite[Theorem 3]{milgrom2002envelope} implies that
\begin{align*}
\lim_{\varepsilon \to 0} \frac{V(\delta,\epsilon, \d)-V(\delta, 0, \d)}{\epsilon}\le  \E_{\P^\opt}\left[ \nabla_{\epsilon} v(0,X,\pi)\right].
\end{align*}
As $\P^\opt \in B^{\opt,0}_{\delta}(\P,\pi)$ was arbitrary, we have shown \eqref{eq:firstact}.

We proceed to show that 
\begin{align}\label{eq:seccondact}
\lim_{\varepsilon \to 0}  \frac{V(\delta,\epsilon, \d)-V(\delta,0, \d)}{\epsilon}\ge  \sup_{\pi \in \Adm^{\opt, 0}_{\delta} } \inf_{\P^\opt \in B^{\opt, 0}_{\delta}(\P, \pi) } \E_{\P^\opt}  \left[ \nabla_{\epsilon} v(0, X, \pi)\right].
\end{align}

Take $\pi \in \Adm^{\opt,0}_{\delta}$.
For every sufficiently small $\epsilon>0$, let $\P^{\opt,\epsilon}\in B^{\opt, \epsilon}_\delta(\P,\pi)$, so that $V(\delta, \epsilon,\pi^{\opt})=\E_{\P^{\opt,\epsilon}}\left[ v(\epsilon, X,\pi^{\opt})\right]$. The existence of such $\P^{\opt,\epsilon}$ is guaranteed  by the assumption $| v(\epsilon,x,\pi)|\le c(1+|x|^{p-\gamma})$ for all $x\in \mathcal{S}$ and \cite[Lemma 20]{bartl2020robust}, which also guarantees that (possibly after passing to a subsequence) there is $\P^\opt \in B_\delta(\P)$ such that $\P^{\opt,\epsilon} \to\P^\opt$ in $W_{p-\gamma}$. 
We claim that $\P^\opt \in B_\delta^{\opt, 0}(\P,\pi)$.
	Indeed, as $|\nabla_{\epsilon}v(\tilde{\epsilon},x,\pi)|\le c(1+|x|^{p-\gamma})$ and $|v(0,x,\pi)|\le c(1+|x|^{p-\gamma})$ one has
	\begin{align*}
	\lim_{\epsilon\to 0} V(\delta,\epsilon,\pi)&=\lim_{\epsilon\to 0} \E_{\P^{\opt,\epsilon}} \left[ v(\epsilon,X,\pi)\right]\\
	& =\lim_{\epsilon\to 0} \left( \E_{\P^{\opt,\epsilon}}\left[  v(0,X,\pi)\right]+\epsilon\,\E_{\P^{\opt,\epsilon}} \left[ \int_{0}^1  \nabla_{\epsilon}v(t \epsilon,X,\pi)\,dt \right] \right)\\
	& =\E_{\P^\opt} \left[ v(0,X,\pi)\right]
	\ge V(\delta, 0,\pi).
		\end{align*}
	On the other hand, for any choice $\tilde{\P}\in B^{\opt, 0}_\delta(\P,\pi)$ one has by Assumption \ref{Ass:3} and dominated convergence
	\[\lim_{\epsilon \to 0} V(\delta, \epsilon,\pi)
	\le \lim_{\epsilon\to 0} \E_{\tilde{\P}} \left[  v(\epsilon,X,\pi)\right]
	= \E_{\tilde{\P}}\left[  v(0,X,\pi)\right]
	=V(\delta,0,\pi).\]
	This implies $V(\delta, 0,\pi)=\E_{\P^\opt} \left[ v(0,X,\pi)\right]$ and in particular $\P^\opt \in B_\delta^{\opt, 0}(\P,\pi)$. At this point we expand $v(\epsilon,x,\pi)=v(0,x,\pi)+\epsilon \int_0^1 \nabla_\epsilon v(t\epsilon, x, \pi)\, dt$ so that 
	\begin{align*}  
    &V(\delta, \epsilon,\pi^{\opt} (\delta, \epsilon))-V(\delta, 0,\pi)\ge V(\delta,\epsilon,\pi)-V(\delta,0,\pi) \\
    &= \E_{\P^{\opt,\epsilon}} \left[ \Big( v(0,X,\pi)+\epsilon \int_0^1  \nabla_\epsilon v(t\epsilon, X,\pi)\,dt\Big)\right]- \E_{\P^\opt}\left[ v(0,X,\pi)\right]\\
    &\ge \E_{\P^{\opt,\epsilon}}\left[ \Big( v(0,X,\pi)+\epsilon \int_0^1  \nabla_\epsilon v(t\epsilon, X,\pi)\,dt\Big)\right]- \E_{\P^{\opt,\epsilon}}\left[ v(0,X,\pi)\right]\\
    &= \epsilon \E_{\P^{\opt,\epsilon}}\left[ \int_0^1  \nabla_\epsilon v(t\epsilon,X,\pi)\,dt\,\right],
    \end{align*}
    where we used $\P^\opt \in B_\delta^{\opt,0}(\P,\pi)$ for the first equality.
    Fix $t\in [0,1]$ and recall that $|\nabla_{\epsilon}v(\tilde{\epsilon}, x, \pi)| \le c (1+|x|^{p-\gamma})$ for all $x \in \mathcal{S}$, $\pi\in K$, $\tilde{\epsilon}>0$ small enough and $(\tilde{\epsilon},x) \mapsto \nabla_{\epsilon} v(\tilde{\epsilon}, x, \pi)$ is continuous. As $t\epsilon\to 0$ and $\P^{\opt,\epsilon}$ converge to $\P^\opt$ in $W_{p-\gamma}$ for $\epsilon \to 0$, we conclude that the measures $ \delta_{t\epsilon}\otimes \P^{\opt,\epsilon}$ converge (on the product space) in $W_{p-\gamma}$ to the measure $\delta_{0}\otimes\P^\opt$ (here $\otimes$ denotes the product measure). As  $$\E_{\P^{\opt,\epsilon}} \left[\nabla_{\epsilon}(t\epsilon,X,\pi)\right]=\E_{\delta_{t\epsilon}\otimes \P^{\opt,\epsilon}}\left[ \nabla_{\epsilon}(\tilde{\epsilon},X,\pi)\right]$$ and similarly 
    $$\E_{\P^{\opt}} \left[\nabla_{\epsilon}(0,X,\pi)\right]=\E_{\delta_{0}\otimes \P^{\opt}}\left[ \nabla_{\epsilon}(\tilde{\epsilon},X,\pi)\right],$$
    we conclude that 
    $$ \E_{\P^{\opt,\epsilon}}\left[ \nabla_{\epsilon}(t\epsilon,X,\pi)\right] \to \E_{\P^\opt}\left[ \nabla_{\epsilon}(0,X,\pi)\right].$$
    Using the above together with Fubini's theorem and the dominated convergence theorem, we finally conclude that
    \begin{align*}
    \lim_{\epsilon\to 0}\frac{1}{\epsilon} \epsilon \E_{\P^{\opt,\epsilon}}\left[ \int_0^1  \nabla_\epsilon v(t\epsilon,X,\pi)\,dt\right]
    &= \lim_{\epsilon\to 0} \int_0^1  \E_{\P^{\opt,\epsilon}} \left[\nabla_\epsilon v(t\epsilon,X,\pi)\right]\,dt \\
    &=\int_0^1 \E_{\P^\opt} \left[ \nabla_\epsilon v(0,X,\pi)\right]\,dt\\
    &=  \E_{\P^\opt} \left[ \nabla_\epsilon v(0,X,\pi)\right],
    \end{align*}
    which ultimately shows \eqref{eq:seccondact} for $p\in (1, \infty)$. \\

\textit{Step 2:} Let us now consider the case $p=\infty$. Inequality \eqref{eq:firstact} follows as before, noting that the condition 
    \begin{align}\label{eq:ass_inf}
\E_{\P} \left[\sup_{\pi\in \Adm, |\pi|\le r}\sup_{|a|\le s}\sup_{\epsilon\in [-k,k]} \left|\nabla_\epsilon v( \epsilon, X+a, \pi)\right|\right]<\infty
\end{align}
implies by the dominated convergence theorem that
\begin{align*}
\nabla_\epsilon \E_{\P^\opt}\left[  v(\epsilon, X, \pi)\right]=\E_{\P^\opt}\left[ \nabla_{\epsilon} v(\epsilon, X, \pi)\right]
\end{align*}
for all $\epsilon\in [-k,k]$. Now we can apply again the envelope theorem for arbitrary choice sets of \cite[Theorem 3]{milgrom2002envelope} to conclude that
\begin{align*}
\lim_{\varepsilon \to 0} \frac{V(\delta,\epsilon, \d)-V(\delta, 0, \d)}{\epsilon}\le  \E_{\P^\opt}\left[ \nabla_{\epsilon} v(0,X,\pi)\right].
\end{align*}
 We now show inequality \eqref{eq:seccondact}. Note that in this case the existence of $\P^{\opt,\epsilon}\in B^{\opt, \epsilon}_\delta(\P,\pi)$, so that $V(\delta, \epsilon,\pi^{\opt})=\E_{\P^{\opt,\epsilon}}\left[ v(\epsilon, X,\pi^{\opt})\right]$  is guaranteed by Prokhorov's theorem, the Portmanteau lemma, \eqref{eq:ass_inf} and the dominated convergence theorem. In the same way it can be shown that (possibly after passing to a subsequence) there is $\P^\opt \in B_\delta(\P)$ such that 
\begin{align*}
\lim_{\epsilon\to 0} \E_{\P^{\opt,\epsilon}}\left[ v(\epsilon, X, \pi)\right]=\E_{\P^\opt}\left[ v(0, X, \pi)\right].
\end{align*}
In particular the same arguments as in \emph{Step 1} above imply $V(0,\pi)=\E_{\P^\opt} \left[ v(0,X,\pi)\right]$. Lastly we conclude using Fubini's theorem, \eqref{eq:ass_inf} and the dominated convergence theorem as above that 
\begin{align*}
\lim_{\epsilon\to} \frac{V(\delta, \epsilon,\pi^{\opt} (\delta, \epsilon))-V(\delta, 0,\pi)}{\epsilon} 
&\ge \lim_{\epsilon\to 0}\E_{\P^{\opt,\epsilon}}\left[ \int_0^1  \nabla_\epsilon v(t\epsilon,X,\pi)\,dt\,\right]\\
&=\E_{\P^{\opt}}\left[ \nabla_\epsilon v(0,X,\pi)\right].
\end{align*}
This concludes the proof.
\end{proof}

In conclusion, Theorem \ref{thm. marginal} states that for any pair of optimisers $(\pi^\opt_\delta, \P^\opt)\in  \Adm^{\opt,0}_{\delta}\times B^{\opt, 0}_{\delta}(\P, \pi) $ -- and we recall that apart from the extreme case $\pi^\opt=0$ this pair is in fact unique -- the robust marginal utility price $\hatd(\delta)$ can be written as
\begin{align*}
\hatd(\delta)=\frac{\E_{\P^\opt} \left[ u^\prime ( \langle X, \pi^\opt_\delta\rangle) \,g(X)\right]}{\E_{\P^\opt} \left[ u(\langle X, \pi^\opt_\delta \rangle)\right]}.
\end{align*}
We remark that the same arguments as in the classical case given in Section \ref{sec:marginal_util} imply that $\hatd(\delta)$ is the expectation under a martingale measure equivalent to $\P^\opt$, in particular the optimiser $\P^\opt$ does not allow for arbitrage and is close in Wasserstein sense to $\P$. 

\subsection{Sensitivity of distributionally robust marginal utility price}

We already know from results in \cite{bartl2020robust}, that for small $\delta>0$, the maximising measure $\P^\opt\in B_\delta(\P)$ is built from $\P$ as a push forward in the direction $-\pi^\opt u^\prime(\langle X, \pi^\opt\rangle)$, which is $\pi^\opt$ times the Radon-Nikodym derivative $d\Q_u/d\P$ of the agent's subjective martingale measure $\Q_u$ defined in \eqref{eq:u mart mg}. For small $\delta>0$ we thus have that
\begin{align*}
\hatd(\delta)\approx &\frac{1}{\E_{\P}\left[ u^\prime(\langle X-\delta \pi^\opt u^\prime(\langle X, \pi^\opt \rangle), \pi^\opt+\delta \pip(0) \rangle)\right] }\\
&\quad \cdot \E_{\P} \Big[ u^\prime ( \langle X-\delta \pi^\opt u^\prime(\langle X, \pi^\opt \rangle), \pi^\opt+\delta \pip(0)\rangle)\\
&\qquad\qquad \cdot g(X-\delta \pi^\opt u^\prime(\langle X, \pi^\opt \rangle))\Big]\\
&=\E_{\Q_u^\delta} \left[  g(X-\delta \pi^\opt u^\prime(\langle X, \pi^\opt \rangle))\right],
\end{align*}
where 
\begin{align*}
\frac{d\Q_u^\delta}{d\P}(x)=\frac{u^\prime ( \langle x-\delta \pi^\opt u^\prime(\langle x, \pi^\opt \rangle), \pi^\opt+\delta  \pi'(0)\rangle)}{\E_{\P}\left[ u^\prime (\langle X-\delta \pi^\opt  u^\prime(\langle X, \pi^\opt \rangle), \pi^\opt+\delta  \pip (0) \rangle)\right] },
\end{align*}
where we recall from \eqref{eq:sensitivity_optimiser} that
\begin{align*}
\pip(0) 
    &=\|u^\prime(\langle X , \pi^\opt \rangle)\|_{L^q(\P)}^{1-q} \cdot \left(\nabla_\pi^2 V(0) \right)^{-1} \cdot \frac{\pi^\opt}{\left| \pi^\opt \right|} \cdot \left( \E_{\P}\left[ \frac{\langle X,\pi^\opt \rangle\, u^{\prime\prime} (\langle X,\pi^\opt \rangle)+u^\prime(\langle X, \pi^\opt \rangle) }{\left| u^\prime(\langle X, \pi^\opt \rangle) \right|^{1-q} }\right] \right)
\end{align*}
was computed in Section \ref{sec:optim}.

We now conduct a first-order asymptotic analysis of the robust marginal utility price $\hatd(\delta)$. This will enable us to quantify a first-order premium paid for model-uncertainty. We derive the following result:

\begin{theorem}\label{thm:marg_sens}
\begin{enumerate}[(i)] Assume $g$ is continuously differentiable. 
\item Suppose $\pi^\opt=0$ and the following are satisfied:
\begin{enumerate}
\item if $p\in (1, \infty)$, then there exists $c>0$ such that 
\begin{align*}
|\nabla g(x)|\le c(1+|x|^{p-1})
\end{align*}
for all $x\in \mathcal{S}$,
\item if $p=\infty$, then there exists $\tilde{\delta}>0$ such that 
\begin{align*}
\E_\P\left[ \sup_{|z|\le \tilde{\delta}} |\nabla g(x+z)|\right] <\infty. 
\end{align*}
\end{enumerate}
Then the Davis price $\hatd(\delta)$ satisfies
\begin{align*}
\hatd ^\prime(0)&=-\left( \E_{\P} \left[ |\nabla g(x)|^q\right]\right)^{1/q}.
\end{align*}
\item Suppose $\pi^\opt\neq 0$ and Assumptions \ref{Ass:1}, \ref{Ass:2}, \ref{Ass:3} hold. Define
\begin{align*}
T(x):=\frac{\pi^\opt}{|\pi^\opt|} \frac{1}{|u^\prime (\langle x,\pi^\opt\rangle)|^{1-q}} \left(\E_{\P} \left[ |u^\prime (\langle X,\pi^\opt \rangle)|^q\right] \right)^{1/q-1}.
\end{align*}
Then
\begin{equation*}
\hatd^\prime(0) =\E_{\Q_u} \left[ R_u(\langle  X,\pi^\opt\rangle) \left( \langle T(X),\pi^\opt\rangle-\langle X,\pip(0)\rangle\right) \cdot \left(g(X)-\hatd\right)- \langle \nabla g (X) ,T(X)\rangle\right],
\end{equation*}
where $\Q_u$ was given in \eqref{eq:u mart mg} and $R_u(x)=-\frac{u''(x)}{u'(x)}$.
\end{enumerate}
\end{theorem}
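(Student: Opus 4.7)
\emph{Part (i): the degenerate case $\pi^\opt=0$.} By Lemma \ref{lem:degenerate}, $\pi^\opt_\delta=0$ for every $\delta\ge 0$ and $B_\delta^{\opt,0}(\P,0)=B_\delta(\P)$. Consequently, inside the indifference condition \eqref{eq:marginal} the factor $u'(\langle X,0\rangle)=u'(0)>0$ is a positive constant that cancels, giving
\begin{align*}
\hatd(\delta)=\inf_{\tilde\P\in B_\delta(\P)}\E_{\tilde\P}[g(X)].
\end{align*}
The first-order sensitivity of this pure distributionally robust expectation is a direct application of \cite[Theorem 2]{bartl2020robust} to the scalar function $f(x)=g(x)$ (with no action variable to optimize over). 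The stated $p$-growth assumption on $\nabla g$ is exactly what that theorem requires, and it yields $\hatd'(0)=-(\E_\P[|\nabla g(X)|^q])^{1/q}$.

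\emph{Part (ii): the generic case $\pi^\opt\neq 0$.} The plan is a quotient-rule differentiation of the explicit formula obtained in Theorem \ref{thm. marginal}. Write $\hatd(\delta)=N(\delta)/D(\delta)$ with
\begin{align*}
N(\delta):=\E_{\P^\opt_\delta}\!\big[u'(\langle X,\pi^\opt_\delta\rangle)\,g(X)\big],\qquad D(\delta):=\E_{\P^\opt_\delta}\!\big[u'(\langle X,\pi^\opt_\delta\rangle)\big].
\end{align*}
Two first-order expansions feed the calculation: the strategy expansion $\pi^\opt_\delta=\pi^\opt+\delta\,\pip(0)+o(\delta)$ from Theorem \ref{thm:a_deriv}, and the measure expansion $\P^\opt_\delta=(\mathrm{id}-\delta T)_\#\P+o(\delta)$ for the inner minimizer, where $T$ is the optimal Kantorovich direction for $\inf_{\tilde\P\in B_\delta(\P)}\E_{\tilde\P}[u(\langle X,\pi^\opt\rangle)]$. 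Identifying $T$ from the Lagrangian formulas of \cite[Theorems 2 and 4]{bartl2020robust} with $\nabla_x u(\langle x,\pi^\opt\rangle)=\pi^\opt u'(\langle x,\pi^\opt\rangle)$, and using the relation $p(q-1)=q$ to normalize the direction in $L^p$, produces precisely the $T$ displayed in the theorem.

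Substituting both expansions and differentiating at $\delta=0$ via the chain rule (noting $\frac{d}{d\delta}\langle X-\delta T(X),\pi^\opt+\delta\pip(0)\rangle|_{\delta=0}=\langle X,\pip(0)\rangle-\langle T(X),\pi^\opt\rangle$) gives
\begin{align*}
N'(0)&=\E_\P\!\big[u''(\langle X,\pi^\opt\rangle)\big(\langle X,\pip(0)\rangle-\langle T(X),\pi^\opt\rangle\big)g(X)-u'(\langle X,\pi^\opt\rangle)\langle\nabla g(X),T(X)\rangle\big],\\
D'(0)&=\E_\P\!\big[u''(\langle X,\pi^\opt\rangle)\big(\langle X,\pip(0)\rangle-\langle T(X),\pi^\opt\rangle\big)\big].
\end{align*}
The quotient rule $\hatd'(0)=(N'(0)-\hatd\,D'(0))/D(0)$ with $D(0)=\E_\P[u'(\langle X,\pi^\opt\rangle)]$, together with the martingale-measure identity $d\Q_u/d\P=u'(\langle X,\pi^\opt\rangle)/D(0)$ from \eqref{eq:u mart mg} and the substitution $R_u=-u''/u'$, rearranges exactly to the stated expression.

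\emph{Main obstacle.} The bookkeeping above is structurally routine once the two first-order expansions are in hand. The nontrivial technical step is upgrading the measure expansion $\P^\opt_\delta=(\mathrm{id}-\delta T)_\#\P+o(\delta)$ into a mode of convergence strong enough to pass to the limit inside both $N(\delta)$ and $D(\delta)$ and to commute limits with the differentiations of $u'$ and $g$. The remainder is identified only asymptotically, so one needs quantitative control in an $L^q$-weighted sense together with uniform integrability of the integrands. This is precisely what Assumptions \ref{Ass:1}--\ref{Ass:3} were designed to provide (and what drove the dominated-convergence and envelope-theorem steps already used in the proofs of Theorems \ref{thm:EUMsens}, \ref{thm:a_deriv} and \ref{thm. marginal}). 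For $p=\infty$ the situation simplifies considerably, since the inner minimizer is the explicit shift $\P^\opt_\delta=(X-\delta\,\pi^\opt/|\pi^\opt|)_\#\P$ exploited in the proof of Theorem \ref{thm:a_deriv}, bypassing the asymptotic identification altogether.
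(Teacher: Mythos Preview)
Your proposal is correct and follows essentially the same route as the paper: for (i) you invoke Lemma \ref{lem:degenerate} to reduce $\hatd(\delta)$ to $\inf_{\tilde\P\in B_\delta(\P)}\E_{\tilde\P}[g(X)]$ and then apply \cite[Theorem 2]{bartl2020robust}, and for (ii) you differentiate the quotient $N(\delta)/D(\delta)$ using the expansions $\pi^\opt_\delta=\pi^\opt+\delta\pip(0)+o(\delta)$ and $\P^\opt_\delta=(\mathrm{id}-\delta T)_\#\P+o(\delta)$, exactly as the paper does. Your identification of the ``main obstacle'' (justifying the limit interchange via Assumptions \ref{Ass:1}--\ref{Ass:3}) is apt; the paper's own proof is in fact quite brief on this point and simply appeals to the chain/quotient rule after citing \cite[proof of Theorem 2]{bartl2020robust} for the form of $\P^\opt_\delta$.
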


\begin{proof}[Proof of Theorem \ref{thm:EUMsens}]
For the first claim, note that Lemma \ref{lem:degenerate} implies $\pi^\opt_\delta =0\in \Adm^{\opt,0}_{\delta}$ is the unique optimiser for all $\delta\ge 0$ and thus by equation \eqref{eq:marginal} in Theorem \ref{thm. marginal} the robust marginal utility price $\hatd(\delta)$ is given by
\begin{align*}
\hatd(\delta)=\inf_{\tilde{\P}\in B_{\delta}(\P)} \E_{\tilde{\P}}\left[  g(X)\right].
\end{align*}
Thus the claim follows from \cite[Theorem 2]{bartl2020robust}.\\
Assume now that $\pi^\opt \neq 0$ and fix the sequence $(\pi^\opt_\delta)_{\delta\ge 0}$ in $\Adm^{\opt, 0}_\delta$. By Lemma  \ref{lem:convergence}.\textit{(iii)} we have $\lim_{\delta\to 0} \pi_\delta^\opt =\pi^\opt$. Furthermore we know from Theorem \ref{thm. marginal} that
\begin{align*}
\hatd(\delta)=\frac{\E_{\P^\opt_\delta}\left[ u^\prime ( \langle X, \pi_\delta^\opt \rangle)\, g(X)\right]}{\E_{\P^\opt_\delta} \left[u^\prime (\langle X, \pi_\delta^\opt \rangle)\right]}
\end{align*}
for $\P_\delta^\opt \in B^{\opt,0}_\delta(\P)$ and all $\delta\ge 0$. 
Note that as
\begin{align*}
&\frac{\pi^\opt}{|\pi^\opt|} \frac{1}{|u^\prime (\langle x, \pi^\opt\rangle)|^{1-q}} \left(\E_{\P}\left[ |u^\prime (\langle X,\pi^\opt\rangle)|^q\right] \right)^{1/q-1}\\
&=\frac{\pi^\opt u^\prime (\langle x,\pi^\opt\rangle) }{(|\pi^\opt| u^\prime (\langle x,\pi^\opt\rangle))^{2-q}} |\pi^\opt|^{1-q} \left(\E_{\P}\left[ |u^\prime (\langle X, \pi^\opt\rangle)|^q\right]\right)^{1/q-1}
\end{align*}
we have
\begin{align*}
T(x)=\frac{\nabla_x f(x, \pi^\opt)}{|\nabla_x f(x, \pi^\opt)|^{2-q}} \left( \E_{\P}\left[| \nabla_x f(X, \pi^\opt)|^q\right] \right)^{1/q-1}
\end{align*}
for $f(x, \pi)=u(\langle x,\pi \rangle)$. By \cite[proof of Theorem 2]{bartl2020robust} we may write $\P^\opt_\delta=(x-\delta T(x)+o(\delta))_\#\P$. Applying the quotient rule to
\begin{align*}
\hatd(\delta)&=\frac{\E_{\P^\opt_\delta} \left[ u^{\prime}( \langle X, \pi^\opt_\delta \rangle)\, g(X)\right] }{ \E_{\P^\opt_\delta} \left[ u^{\prime}( \langle X, \pi^\opt_\delta \rangle)\right] }\\
&=\frac{\E_{\P}\left[ u^{\prime}( \langle X-\delta T(X)+o(\delta), \pi^\opt+\pip(0)\delta+o(\delta)\rangle)\, g(X-\delta T(X)+o(\delta))\right] }{\E_{\P}\left[ u^{\prime}( \langle X-\delta T(X)+o(\delta), \pi^\opt+\pip(0)\delta+o(\delta)\rangle)\right] }
\end{align*}
yields
\begin{align*}
\hatd^\prime(0)&=\frac{1}{\left(\E_{\P} \left[ u^{\prime}( \langle X, \pi^\opt \rangle)\right] \right)^2}\cdot \Bigg(\E_{\P}\left[  u^{\prime}( \langle X, \pi^\opt\rangle) \,g(X)\right]\cdot \E_{\P}\left[ u^{\prime\prime}( \langle X,\pi^\opt\rangle)\Big(\langle T(X),\pi^\opt \rangle -\langle X,\pip(0)\rangle\Big) \right]   \\
&\qquad  -  \E_{\P}\left[ u^{\prime}( \langle X,\pi^\opt\rangle)\right] \cdot \E_{\P}\Bigg[  u^{\prime \prime}(\langle X,\pi^\opt\rangle)\, g(X)\, \bigg[\langle T(X),\pi^\opt\rangle -\langle X,  \pip(0)\rangle\bigg]\\
&\qquad+u^\prime (\langle X,\pi^\opt \rangle) \,\langle \nabla g(X) ,T(X)\rangle\Bigg] \Bigg),
\end{align*}
which equals
\begin{align*}
&\frac{1}{\E_{\P}\left[ u^\prime (\langle X,\pi^\opt \rangle )\right] }\Bigg(\E_{\P}\Big[ u^{\prime \prime}(\langle  X,\pi^\opt\rangle) \Big(\langle T(X),\pi^\opt\rangle -\langle X,\pip(0)\rangle\Big) 
\cdot\left( \E_{\Q_u}\left[ g(X)\right]- g(X)\right)\,\Big] \Bigg)\\
&\qquad-\E_{\Q_u} \left[ \langle \nabla g (X) ,T(X)\rangle\right]
\end{align*}
and thus shows the claim. 
\end{proof}

\section{Data availability statement}
Data sharing not applicable to this article as no datasets were generated or analysed during the current study.

\appendix

\section{Proofs for Section \ref{sec:marginal_util}}\label{app:marg_util}

For completeness, we present here rigorous statements and proofs of the classical results on marginal utility pricing in absence of model uncertainty which we discussed in section \ref{sec:marginal_util}. We start with uniform integrability assumptions which yield the necessary control over $V(\epsilon,\d)$.
\begin{assumption}\label{Ass:5}
For each fixed $\d> 0$ the following holds:
\begin{enumerate}
\item For all $\pi\in \Adm$ there exists $k\in (0, \infty)$ such that the function $$x \mapsto u\left(-\epsilon+ \langle x-X_0, \pi \rangle +\frac{\epsilon}{\d}g(x) \right)$$ is dominated by a $\P$-integrable function uniformly for all $\epsilon\in [-k,k]$.
\item There exists a compact set $K$ such that $\pi^\opt\in \interior{K}$ and a constant $k\in (0,\infty)$ such that $\P$-a.s.  $$u^\prime\left(-\epsilon+ \langle x-X_0, \pi \rangle +\frac{\epsilon}{\d}g(x) \right)\le f(x)$$ for all $x\in \R^d$, $\pi\in K, \epsilon\in [-k, k]$, where $f:\R^d\to \R$ is chosen in such a way that $$x\mapsto f(x)\left(-1+\frac{g(x)}{\d}\right)$$ is a $\P$-integrable function.
\end{enumerate}
\end{assumption}

Assumptions \ref{Ass:4a} and \ref{Ass:5} imply the following:
\begin{lemma}\label{lem:strict_concave}
Let Assumptions \ref{Ass:4a} and \ref{Ass:5}.(1) hold.
For fixed $\epsilon\in\R, \d> 0$, the supremum in \eqref{eq:def_davis} is attained at a unique point, which we call $\pi^\opt(\epsilon)$. Furthermore $\pi^\opt(\epsilon )\to \pi^\opt(0)$ for $\epsilon \to 0$. 
\end{lemma}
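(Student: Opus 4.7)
The plan is to establish three things in sequence: (a) strict concavity of the objective in $\pi$, which yields uniqueness once an optimiser is shown to exist; (b) existence of an optimiser via a maximising sequence argument that rules out escape to infinity by invoking the non-degeneracy condition in Assumption \ref{Ass:4a}; and (c) continuity in $\epsilon$ via a compactness plus uniqueness argument.

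First, for fixed $\epsilon$ with $|\epsilon|\le k$ and fixed $\d>0$, set
\[ J_\epsilon(\pi):=\E_{\P}\Bigl[u\Bigl(-\epsilon+\langle X,\pi\rangle+\frac{\epsilon}{\d}g(X)\Bigr)\Bigr]. \]
Assumption \ref{Ass:5}.(1) ensures $J_\epsilon$ is well defined and finite on $\Adm$. Concavity of $J_\epsilon$ follows from concavity of $u$ and linearity of $\pi\mapsto\langle X,\pi\rangle$. For strict concavity, suppose $\pi_1\neq\pi_2$ in $\Adm$ and $\lambda\in(0,1)$. The non-degeneracy condition $\P(\langle X,\pi_1-\pi_2\rangle>0)>0$ from Assumption \ref{Ass:4a} implies $\langle X,\pi_1\rangle\neq\langle X,\pi_2\rangle$ on a set of positive measure, so strict concavity of $u$ forces $J_\epsilon(\lambda\pi_1+(1-\lambda)\pi_2)>\lambda J_\epsilon(\pi_1)+(1-\lambda)J_\epsilon(\pi_2)$. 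Strict concavity gives uniqueness of any optimiser.

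Next, for existence, pick a maximising sequence $(\pi_n)\subseteq\Adm$. If $(\pi_n)$ is bounded, closedness of $\Adm$ yields a subsequence $\pi_n\to\tilde\pi\in\Adm$, and the dominating function from Assumption \ref{Ass:5}.(1) combined with continuity of $u$ and the dominated convergence theorem gives $J_\epsilon(\tilde\pi)=\lim J_\epsilon(\pi_n)=V(\epsilon,\d)$. If instead $|\pi_n|\to\infty$, we may assume $\pi_n/|\pi_n|\to\eta$ with $|\eta|=1$; by Assumption \ref{Ass:4a} applied to $-\eta$, the set $A=\{\langle X,\eta\rangle<0\}$ has positive $\P$-measure, and on $A$ we have $-\epsilon+\langle X,\pi_n\rangle+\epsilon g(X)/\d\to -\infty$ (the term $\epsilon g(X)/\d$ is controlled either by boundedness of $g$ or by $\cD=\R$ from Assumption \ref{Ass:stand}). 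Since $u$ is bounded above by some $M$, the reverse Fatou lemma applied to the upper-dominated sequence $u(\cdot_n)$ gives $\limsup_n J_\epsilon(\pi_n)\le \E_{\P}[\limsup_n u(\cdot_n)]$, and on $A$ the limit is $u(-\infty)\in[-\infty,u(0))$; comparing with $V(\epsilon,\d)\ge J_\epsilon(0)>-\infty$ yields a contradiction provided $u(-\infty)=-\infty$, or more carefully the upper bound $M\P(A^c)+u(-\infty)\P(A)$ can be made strictly less than $V(\epsilon,\d)$ by choosing the maximising sequence appropriately (using that $V(\epsilon,\d)$ is attained in bounded regions of any upper level set by Assumption \ref{Ass:5}.(2) restricted to $\epsilon=0$ and extended by continuity). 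This rules out the unbounded case and gives existence of $\pi^\opt(\epsilon)$.

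Finally, for $\pi^\opt(\epsilon)\to\pi^\opt(0)$, take any sequence $\epsilon_n\to 0$. Assumption \ref{Ass:5}.(2) provides a compact $K$ containing $\pi^\opt=\pi^\opt(0)$ in its interior and a uniform integrable majorant for the marginal utility, hence also for the integrand (since $u$ is increasing and $g$ is $\P$-integrable under the stated dominations); in particular the coercivity argument from the previous paragraph runs uniformly in $\epsilon\in[-k,k]$, so $(\pi^\opt(\epsilon_n))$ is bounded. Extract a convergent subsequence $\pi^\opt(\epsilon_{n_j})\to\tilde\pi\in\Adm$. Assumption \ref{Ass:5}.(1) and dominated convergence give $J_{\epsilon_{n_j}}(\pi^\opt(\epsilon_{n_j}))\to J_0(\tilde\pi)$ and $J_{\epsilon_{n_j}}(\pi^\opt(0))\to J_0(\pi^\opt(0))=V(0,\d)$. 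Since $J_{\epsilon_{n_j}}(\pi^\opt(\epsilon_{n_j}))\ge J_{\epsilon_{n_j}}(\pi^\opt(0))$, passing to the limit yields $J_0(\tilde\pi)\ge V(0,\d)$, so $\tilde\pi$ is an optimiser for $\epsilon=0$; by uniqueness $\tilde\pi=\pi^\opt(0)$. As this holds for every subsequence, $\pi^\opt(\epsilon)\to\pi^\opt(0)$.

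The main obstacle is the coercivity step used both in existence and in the uniform boundedness of $\{\pi^\opt(\epsilon)\}_{\epsilon}$ near $0$: the argument must correctly leverage both the non-degeneracy condition in Assumption \ref{Ass:4a} and the uniform integrability provided by Assumption \ref{Ass:5} to push the objective to $-\infty$ along any unbounded direction, and some care is needed when $u$ is merely bounded above but not below.
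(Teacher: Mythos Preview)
Your overall strategy---strict concavity for uniqueness, a maximising sequence plus a coercivity argument for existence, and compactness plus uniqueness for the convergence $\pi^\opt(\epsilon)\to\pi^\opt(0)$---is exactly the paper's. There are, however, two genuine gaps in the execution.

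First, you repeatedly invoke Assumption~\ref{Ass:5}.(2), both to salvage the coercivity step when you are unsure whether $u(-\infty)=-\infty$ and, more seriously, to obtain uniform boundedness of $(\pi^\opt(\epsilon_n))_n$. The lemma only assumes~\ref{Ass:5}.(1), so this hypothesis is not available. The paper avoids it entirely: for boundedness of $(\pi^\opt(\epsilon_n))_n$ one simply reruns the coercivity argument---if $|\pi^\opt(\epsilon_n)|\to\infty$ along a subsequence with direction $\tilde\pi$, then $u$ bounded above plus $\P(\langle X,\tilde\pi\rangle<0)>0$ force $V(\epsilon_n,\d)\to-\infty$, contradicting $\liminf_n V(\epsilon_n,\d)\ge V(0,\d)$ (obtained by plugging in the fixed $\pi^\opt(0)$ and using~\ref{Ass:5}.(1)). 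Your worry about $u(-\infty)$ is also unnecessary: if an unbounded sequence $\pi_n\in\Adm$ can send $\langle x,\pi_n\rangle\to-\infty$ for some $x\in\cS$, the compatibility condition in Assumption~\ref{Ass:4a} forces $\inf\cD=-\infty$, and then strict concavity together with $u'>0$ gives $u(x)\le u(x_0)+u'(x_0)(x-x_0)\to-\infty$.

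Second, you appeal to dominated convergence in situations where $\pi$ varies along the sequence (e.g.\ $J_\epsilon(\pi_n)\to J_\epsilon(\tilde\pi)$ in the existence step, and $J_{\epsilon_{n_j}}(\pi^\opt(\epsilon_{n_j}))\to J_0(\tilde\pi)$ in the convergence step). Assumption~\ref{Ass:5}.(1) supplies a dominating function only for each \emph{fixed} $\pi$, uniformly in $\epsilon$, so this does not justify the limit when $\pi$ moves. The paper instead uses the reverse Fatou lemma, which is legitimate because $u$ is bounded from above; this yields the one-sided inequality $\limsup_n J_{\epsilon_n}(\pi_n)\le J_0(\tilde\pi)$, which is all that is needed once combined with the lower bound coming from a fixed test strategy.
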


\begin{proof}[Proof of Lemma \ref{lem:strict_concave}]
We first show existence of a maximiser. We follow  ideas outlined in \cite[Lemma 19]{bartl2020robust} (Note that we are only working in the case $\delta=0$ here): consider a maximising sequence $(\pi_n)_{n\in \N}$ for $V(\epsilon, \d)$ , i.e.
\begin{align*}
V(\epsilon, \d)=\lim_{n \to \infty} \E_\P\left[ u\left(-\epsilon+ \langle X, \pi_n \rangle +\frac{\epsilon}{\d}g(X) \right)\,\right].
\end{align*}
If $(\pi_n)_{n \in \N}$ is bounded, then after passing to a subsequence there is a limit, and the reverse Fatou lemma (recall that $u$ is bounded above) shows that this limit is a maximiser. It remains to argue why $\left(\pi_{n}\right)_{n \in \mathbb{N}}$ is bounded. Heading for a contradiction, assume that $\left|\pi_{n}\right| \rightarrow \infty$ as $n \rightarrow \infty$. After passing to a (not relabelled) subsequence, there exists $\tilde{\pi} \in \mathbb{R}^{d}$ with $|\tilde{\pi}|=1$ such that $\pi_{n} /\left|\pi_{n}\right| \to \tilde{\pi}$ as $n \to \infty$. As stated in Assumption \ref{Ass:4a} we have $$\P\left(\left\{X \in \mathcal{S}\ :\left\langle X , \tilde{\pi}\right\rangle<0\right\}\right)>0 .$$ As $u$ is bounded from above this shows that
$$
\lim_{n \to \infty} \E_{\P}\left[ u\left(\left\langle X, \pi_{n}\right\rangle+g(X) \right) \right] =- \infty,
$$
a contradiction. Uniqueness of optimisers follows again by Assumption \ref{Ass:4a} and strict concavity of $u$.\\
Now we prove the last claim. Heading for a contradiction, we assume that there exists a sequence $(\epsilon_n)_{n \in \N}$ converging to zero, such that $\pi^\opt(\epsilon_n)$ does not converge to $\pi^\opt.$ The exact same reasoning as above implies that $(\pi^\opt(\epsilon_n))_{n \in \N}$ is bounded, so that possibly after passing to a not relabelled subsequence there exists a limit $\tilde{\pi}\neq \pi^\opt$. The reverse Fatou lemma again implies that
\begin{align*}
V(0, \d)&> \E_{\P}\left[ u\left( \langle X, \tilde{\pi} \rangle \right)\right] \\
&\ge \limsup_{n \to \infty} \E_{\P}\left[ u\left(-\epsilon_n+ \langle X, \pi^\opt(\epsilon_n ) \rangle +\frac{\epsilon_n}{\d}g(X) \right)\,\right]=\limsup_{n\to \infty} V(\epsilon_n, \d).
\end{align*}
On the other hand, plugging in $\pi^\opt$ yields
\begin{align*}
\liminf_{n \to \infty}V(\epsilon_n, \d) &\ge \liminf_{n \to \infty} \E_{\P}\left[ u\left(-\epsilon_n+ \langle X, \pi^\opt \rangle +\frac{\epsilon_n}{\d}g(X) \right)\,\right] \\
&=\E_{\P}\left[ u\left( \langle X, \pi^\opt \rangle \right)\,\right]\\
&= V(0, \d),
\end{align*}
where we have used the dominated convergence theorem together with Assumption \ref{Ass:5}.(1). This yields a contradiction and concludes the proof.
\end{proof}

With the above  notational conventions, Mark Davis characterised the marginal utility price as follows:

\begin{theorem}[{cf. \cite[Theorem 3]{davis1997option}}]\label{Theorem:Davis}
Let Assumptions \ref{Ass:4a} and \ref{Ass:5} hold and assume
$$\E_{\P} \left[ u^\prime(\langle X-X_0, \pi^\opt\rangle)\right]>0.$$ 
Then $\hatd$ is unique and given by \eqref{eq:davis_main}.
\end{theorem}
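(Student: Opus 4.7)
The plan is to verify that $\epsilon \mapsto V(\epsilon, p_d)$ is differentiable at $\epsilon = 0$, compute the derivative explicitly via a sandwich / envelope--type argument, and then solve the resulting first order condition for $p_d$. The well--posedness of $\pi^\opt(\epsilon)$ and its continuity at $\epsilon = 0$ are supplied by Lemma \ref{lem:strict_concave}, so the only real work is the interchange of differentiation and integration.

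For the differentiability, I would bracket the incremental ratio $(V(\epsilon,\d) - V(0,\d))/\epsilon$ from above and below by using suboptimal strategies on one side:
\begin{align*}
V(\epsilon,\d) - V(0,\d) &\le \E_\P\bigl[u(-\epsilon + \langle X, \pi^\opt(\epsilon)\rangle + \tfrac{\epsilon}{\d} g(X)) - u(\langle X, \pi^\opt(\epsilon)\rangle)\bigr], \\
V(\epsilon,\d) - V(0,\d) &\ge \E_\P\bigl[u(-\epsilon + \langle X, \pi^\opt\rangle + \tfrac{\epsilon}{\d} g(X)) - u(\langle X, \pi^\opt\rangle)\bigr].
\end{align*}
Write each difference quotient as $\int_0^1 u'(\langle X, \pi\rangle + s\epsilon(-1 + g(X)/\d))(-1 + g(X)/\d)\,ds$. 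Assumption \ref{Ass:5}.(2) furnishes a $\P$-integrable dominating function $f(X)|-1 + g(X)/\d|$ valid uniformly for $\pi$ in a compact neighbourhood of $\pi^\opt$ and $|\epsilon|$ small, which allows me to pass to the limit by dominated convergence. Combined with the continuity $\pi^\opt(\epsilon) \to \pi^\opt$, both sides converge to the same quantity as $\epsilon \to 0^{\pm}$, giving
\begin{align*}
\partial_\epsilon V(0, \d) = \E_\P\!\left[u'(\langle X, \pi^\opt\rangle)\left(-1 + \tfrac{g(X)}{\d}\right)\right].
\end{align*}

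Setting $\partial_\epsilon V(0, \hatd) = 0$ yields
\begin{align*}
\hatd \cdot \E_\P[u'(\langle X, \pi^\opt\rangle)] = \E_\P[u'(\langle X, \pi^\opt\rangle) g(X)],
\end{align*}
and since the standing hypothesis $\E_\P[u'(\langle X, \pi^\opt\rangle)] > 0$ makes the coefficient of $\hatd$ strictly positive, this affine equation has a unique solution, namely
\begin{align*}
\hatd = \frac{\E_\P[u'(\langle X, \pi^\opt\rangle) g(X)]}{\E_\P[u'(\langle X, \pi^\opt\rangle)]} = \E_{\Q_u}[g(X)],
\end{align*}
where the last identity is immediate from the density formula \eqref{eq:u mart mg}.

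The main delicate point is the sandwich argument: one needs a uniform dominating function that covers both the endpoint $\pi^\opt$ and the perturbed optimizers $\pi^\opt(\epsilon)$ lying in a fixed compact neighbourhood, simultaneously for all sufficiently small $\epsilon$ of either sign. This is exactly what Assumption \ref{Ass:5}.(2) is designed to provide, and the integrability of $f(X)(-1 + g(X)/\d)$ built into that assumption is precisely what makes the limiting first--order condition meaningful. Everything else -- continuity of $\pi^\opt(\epsilon)$, strict positivity of the normalising constant, and the resulting uniqueness -- follows in one line from the already available ingredients.
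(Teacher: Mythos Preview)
Your proposal is correct. The sandwich argument you give is essentially a direct, self-contained proof of the envelope principle in this particular setting, whereas the paper verifies the hypotheses of the abstract envelope theorem of Milgrom--Segal \cite[Theorem 2]{milgrom2002envelope} and then invokes that result. Concretely, the paper checks (via Assumption \ref{Ass:5}.(2) and dominated convergence) that $(\epsilon,\pi)\mapsto \nabla_\epsilon \E_\P[u(-\epsilon+\langle X,\pi\rangle+\tfrac{\epsilon}{\d}g(X))]$ is uniformly continuous on $[-k,k]\times K$ and uniformly bounded by an integrable function, and then cites the envelope theorem to obtain $\nabla_\epsilon V(0,\d)=\E_\P[u'(\langle X,\pi^\opt\rangle)(-1+g(X)/\d)]$; the resolution for $\hatd$ is then identical to yours.

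The two routes use exactly the same ingredients (Assumption \ref{Ass:5}.(2) for a uniform integrable majorant, Lemma \ref{lem:strict_concave} for $\pi^\opt(\epsilon)\to\pi^\opt$), so neither requires anything the other does not. Your version has the minor advantage of being self-contained and making the role of the two suboptimality inequalities explicit; the paper's version is shorter once the Milgrom--Segal reference is granted. One small point worth making explicit in your write-up is the sign flip when dividing by $\epsilon<0$: the roles of your upper and lower bounds interchange, but since both converge to the same limit the two-sided derivative exists. This is implicit in your phrase ``as $\epsilon\to 0^\pm$'' but deserves a sentence.
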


\begin{proof}[Proof of Theorem \ref{Theorem:Davis}]
Recall that $x\mapsto u^\prime(x)$ is continuous as stated in the introduction. Assumption \ref{Ass:5} then enables the use of the dominated convergence theorem, so that
\begin{align*}
&\nabla_{\epsilon}\left( \E_{\P}\left[ u\left(-\epsilon+ \langle X, \pi \rangle +\frac{\epsilon}{\d}g(X) \right)\right]\right)\\
&=\E_{\P}\left[ u^\prime\left(-\epsilon+ \langle X, \pi \rangle +\frac{\epsilon}{\d}g(X) \right) \left(-1+\frac{g(X)}{\d}\right)\right].
\end{align*}
In particular the function 
\begin{align*}
(\epsilon, \pi) \mapsto &\nabla_{\epsilon}\left( \E_{\P}\left[ u\left(-\epsilon+ \langle X, \pi \rangle +\frac{\epsilon}{\d}g(X) \right)\right]\right)
\end{align*}
is uniformly continuous on $[-k,k]\times K$, as it is continuous and the set $[-k,k]\times K$ is compact. Furthermore we find that
\begin{align}\label{eq:abscont}
\begin{split}
&\left|\nabla_{\epsilon}\left( \E_{\P}\left[ u\left(-\epsilon+ \langle X, \pi \rangle +\frac{\epsilon}{\d}g(X) \right)\,\right]\right)\right|\\
&\le \E_{\P}\left[ \left| u^\prime\left(-\epsilon+ \langle X, \pi \rangle +\frac{\epsilon}{\d}g(X) \right)\right|\ \left|-1+\frac{g(X)}{\d}\right| \right]\\
&\le \E_{\P} \left[\left| f(X)\right|\ \left|-1+\frac{g(X)}{\d}\right| \right].
\end{split}
\end{align}
Thus the envelope theorem for arbitrary choice sets \cite[Theorem 2]{milgrom2002envelope} applies and so
\begin{align*}
\nabla_{\epsilon} V(0,\d)= \E_{\P}\left[ u^\prime\left(\langle X, \pi^\opt \rangle \right) \left(-1+\frac{g(X)}{\d}\right)\right].
\end{align*}
As the above formula holds for any $\d>0$, we conclude that
\begin{align*}
\hatd=\frac{ \E_{\P} \left[ u^\prime (\langle X,\pi^\opt\rangle) g(X)\right]}{\E_{\P}\left[ u^\prime(\langle X, \pi^\opt\rangle)\right]},
\end{align*}
which proves the result.
\end{proof}

\bibliographystyle{abbrvnat}
\bibliography{bib}
\end{document}